\algnewcommand\algorithmicinput{\textbf{INPUT:}}
\algnewcommand\INPUT{\item[\algorithmicinput]}
\algnewcommand\algorithmicoutput{\textbf{OUTPUT:}}
\algnewcommand\OUTPUT{\item[\algorithmicoutput]}
    \tikzstyle{block} = [
    \tikzstyle{decision} = [
    \tikzstyle{line} = [draw, -latex']
\newcommand{\keywords}[1]{\par\addvspace\baselineskip
\noindent\keywordname\enspace\ignorespaces#1}
\newcommand{\A}{\mathcal A}
\newcommand{\C}{\mathsf{C}}
\newcommand{\F}{\mathcal F}
\newcommand{\FR}{\mathsf{F}}
\newcommand{\false}{\mathsf{false}}
\newcommand{\N}{\mathbb N}
\newcommand{\NR}{\mathsf{S}}
\newcommand{\R}{\mathsf{R}}
\newcommand{\true}{\mathsf{true}}
\spnewtheorem{example}{Example}{\bfseries}{\rmfamily}
\begin{document}

\title{A Game of Attribute Decomposition for Software Architecture Design}

\author{Jiamou Liu  \ and \ Ziheng Wei}
\institute{School of Computer and Mathematical Sciences \\ Auckland University of Technology, New Zealand\\
\email{\{jiamou.liu, bys7090\}@aut.ac.nz}
}

\maketitle

\begin{abstract}
Attribute-driven software architecture design aims to provide decision support by taking into account the quality attributes of softwares. A central question in this process is: {\em What architecture design best fulfills the desirable software requirements?} \ To answer this question, a system designer  needs to make tradeoffs among several potentially conflicting quality attributes. Such decisions are normally ad-hoc and rely heavily on experiences. We propose a mathematical approach to tackle this problem. Game theory naturally provides the basic language:  Players represent requirements, and strategies involve setting up coalitions among the players. In this way we propose a novel model, called {\em decomposition game (DG)}, for attribute-driven design. We present its solution concept based on the notion of cohesion and expansion-freedom and prove that a solution always exists. We then investigate the computational complexity of obtaining a solution. The game model and the algorithms may serve as a general framework for providing useful guidance for software architecture design. We present our results through running examples and a case study on a real-life software project.

\keywords{Software architecture, coalition game, decomposition game}
\end{abstract}

\section{Introduction}
 Computational game theory studies the algorithmic nature of conflicting entities and establishes {\em equilibria}: A state of balance that minimises the negative effects among players. The field has attracted much attention in the recent 10-15 years due to applications in multi-agent systems, electronic markets and social networks \cite{PC2001,RTV2007,SK2008}. In this paper, we investigate the problem of software architecture design from a game theory perspective. In particular, we provide a novel model, called {\em decomposition game}, which captures interactions among software requirements and derives a software architecture through  equilibria.

The architecture of a software system lays out its basic composition.
For softwares become larger, quality attributes such as performance, reliability, usability and security, play an increasingly important role.
It has been a common belief that architecture design heavily influences the quality attributes such as performance, reliability, usability and security of a software system \cite{B07}. A major objective of architecture design is therefore the assurance of non-functional requirements through  compositional decisions. In other words, we need to answer the following question: {\em What architecture best fulfills the desirable software requirements?}
There is, however, usually no ``perfect'' architecture that fulfills every requirement. For example, performance and security are both  key non-functional requirements, which may demand fast response time to the users, and the application of a sophisticated encryption algorithm, respectively. These two requirements are in intrinsic conflict, as a strong focus of one will negatively impact the fulfilment of the other.
A main task of the software architect, therefore, is to balance such ``interactions'' among requirements, and decide on appropriate tradeoffs among such conflicting requirements.

While it is a common practice to decide on software architecture designs through the designers' experiences and intuition, formal approaches for architecture design are desirable as they facilitate standardisation  and automation of this process, providing rigorous guidelines, allowing automatic analysis and verifications \cite{GD2003}. Notable formal methods in software architecture include a large number of formal architecture description languages (ADL), which are useful tools in communicating and modeling architectures. However, as argued by \cite{WH2005}, industry adoptions of ADL are rare due to limitations in usability and formality. Other algorithmic methods for software architecture design include employing hierarchical clustering algorithms to decompose components based on their common attributes \cite{LXZ2007}, as well as quantifying tradeoffs between requirements \cite{AADHMH2011}.

In this paper, we propose to use computational game theory as a mathematical foundation  for conceptualising software architecture designs from requirements. Our motivation comes from the following two lines of research:

\paragraph{(1). Attribute driven design (ADD)}: ADD is a systematic method for software architecture design. The method was invented by Bass, Klein and Bachmann in \cite{BLMF2002} and subsequently updated and improved through a sequence of works \cite{G01,WR2006}. The goal is to assist designers to analyse quality attribute tradeoffs and provide design suggestions and guidance. Inputs to ADD are functional and non-functional requirements, as well as design constraints; outputs to ADD are conceptual  architectures which outline  coarse-grained system compositions.   The method involves a sequence of well-defined steps that recursively decompose a system to components, subcomponents, and so on. These steps are not algorithmic: They are meant to be followed by system designers based on their experience and understanding of design principles. As mentioned by the authors in \cite{BLMF2002}, an ongoing effort is to investigate rigorous approaches in producing conceptual architectures from requirements, hence enabling automated design recommendation under the ADD framework. To this end, we initiate a game-theoretic study to formulate the interactions among software requirements so that a conceptual architecture can be obtained in an algorithmic way.

\paragraph{(2). Coalition game theory}: A coalition game is one where players exercise collaborative strategies, and competition takes place among coalitions of players rather than individuals.  In ADD, we can imagine each requirement is ``handled'' by a player, whose goal is to set up a coalition with others to maximise the collective payoff. The set of coalitions then defines components in a system decomposition which entails a software architecture. This fits into the language of coalition games. However, the usual axioms in coalition games specify super-additivity and monotonicity, that is, the combination of two coalitions is always more beneficial than each separate coalition, and the payoff increases as a coalition grows in size. Such assumptions are not suitable in this context as combination of two conflicting requirements may result in a lower payoff. Hence a new game model is necessary to reflect the conflicting nature of requirements. In this respect, we propose that our model also enriches the theory of coalition games.

\paragraph*{\bf Our contribution.} 
We provide a formal framework which, following the ADD paradigm \cite{BLMF2002},  recursively decomposes a system into sub-systems; the final decomposition reveals design elements in a software architecture. The basis of the framework is an algorithmic realisation of ADD. A crucial task in this algorithmic realisation is {\em system decomposition}, which derives a rational decomposition of an attribute primitive. 
%
%

We model system decomposition using a game, which we call {\em decomposition game}. The game takes into account  {\em interactions} between requirements, which express the positive (enhancement) or negative (canceling) effects they act on each other. A {\em solution concept} (equilibrium) defines  a rational decomposition, which is based on the notions of {\em cohesion} and {\em expansion-freedom}.

We demonstrate that any such game has a solution, and a solution may not be unique. 
We also investigate algorithms that compute solutions for the decomposition game.
Finding cohesive coalitions with maximal payoff turns out to be  NP-hard (Thm.~\ref{thm:NPComplete}). Hence we propose a relaxed notion of {\em $k$-cohesion} for $k\geq 1$, and present a polynomial time algorithm for finding a $k$-cohesive solution of the game (Thm.~\ref{thm:algo}). To demonstrate the practical significance our the framework, we implement the framework and perform a case study on a real-world Cafeteria Ordering System.

\paragraph*{\bf Paper organisation.} Section~\ref{sec:ADD} introduces the formal ADD framework. Section~\ref{sec:game} discusses decomposition game and its solution concept. Section~\ref{sec:algorithm} presents algorithms for solving decomposition games. Section~\ref{sec:case} presents the case study. Section~\ref{sec:related} discusses related works and finally Section~\ref{sec:conclusion} concludes with future works.


\section{Algorithmic Attribute Driven  Design (ADD) Process} \label{sec:ADD} ADD is a general framework for transforming software requirements into a {\em conceptual software architecture}. Pioneers of this approach introduced it through several well-formed, but informally-defined concepts and steps \cite{BLMF2002,WR2006}.
A natural question arises whether it can be made more algorithmic, which provides unbiased, mathematically-grounded outputs. To answer this question, one would first need to translate the original informal descriptions to a mathematical language. \vspace*{-0.9cm}
\subsection{Software Requirements and Constraints}
\paragraph*{\bf Functional requirements.} Functional requirements are specifications of what tasks the system  perform (e.g. ``the system must notify the user once a new email arrives'').
A functional requirement does not stand alone; often, it acts with other functional requirements to express certain combined functionality (e.g. ``the user should log in before making a booking''). Thus, a functionality may depend on other functionalities. We use a partial ordering $(\FR, \prec)$ to denote the functional requirements where each $r\in \FR$ is a functional requirement, and $r_1\prec r_2$ denotes that $r_1$ depends on $r_2$. Note  that $\prec$ is a transitive relation.
\paragraph*{\bf Non-functional requirements.} Non-functional requirements specify the desired quality attributes; ADD uses {\em general scenarios} and {\em scenarios} as their standard representations. A general scenario is a high-level description on what it means to achieve a non-functional requirement \cite{BLMF2002}.
For example, the general scenario ``\emph{A failure occurs and the system notifies the user; the system continues to perform in a degraded manner}'' refers to the availability attribute.
There has been an effort to document all common general scenarios; a rather full list is given in \cite{G01}.
Note that a general scenario is vaguely-phrased and is meant to serve as a template for more concrete ``instantiations'' of quality attributes. 
Such ``instantiations''  are called scenarios. More abstractly, we use a pair $(\NR, \approx)$ to denote the non-functional requirements where  $\NR$ is a set of scenarios and $\approx$ is an equivalence relation on $\NR$, denoting the {\em general scenario relation}: $q_1\approx q_2$ means that $q_1$ and $q_2$ instantiates the same general scenario.
\paragraph*{\bf Design constraints.} Design constraints are factors that must be taken into account and enforce certain design outcomes. A design constraint
may affect both functional and non-functional requirements. More abstractly, we use a collection of sets $\C\subseteq 2^{\FR\cup \NR}$ to denote the set of design constraints, where each set $c\in \C$ is a design constraint.  Intuitively, if two requirements $r_1,r_2$ belong to the same $c\in \C$, then they are constrained by the same design constraint $c$.
\paragraph*{\bf Derived Functionalities.} The enforcement of certain quality attributes may lead to additional functionalities. For example, to ensure availability, it may be necessary to add extra functionalities to detect failure and automatically bypass failed modules. Hence we introduce a {\em derivation relation} $\hookrightarrow\subseteq \NR \times \FR$ such that $r\hookrightarrow s$ means the functional requirement $s$ is derived from the scenario $r$.

\subsection{Attribute Primitives}

\begin{figure} 
\centering
\caption{\label{fig:requirement}\small Example~\ref{exp:requirement}: The requirements, constraints and their relations.}
\includegraphics[width=6cm]{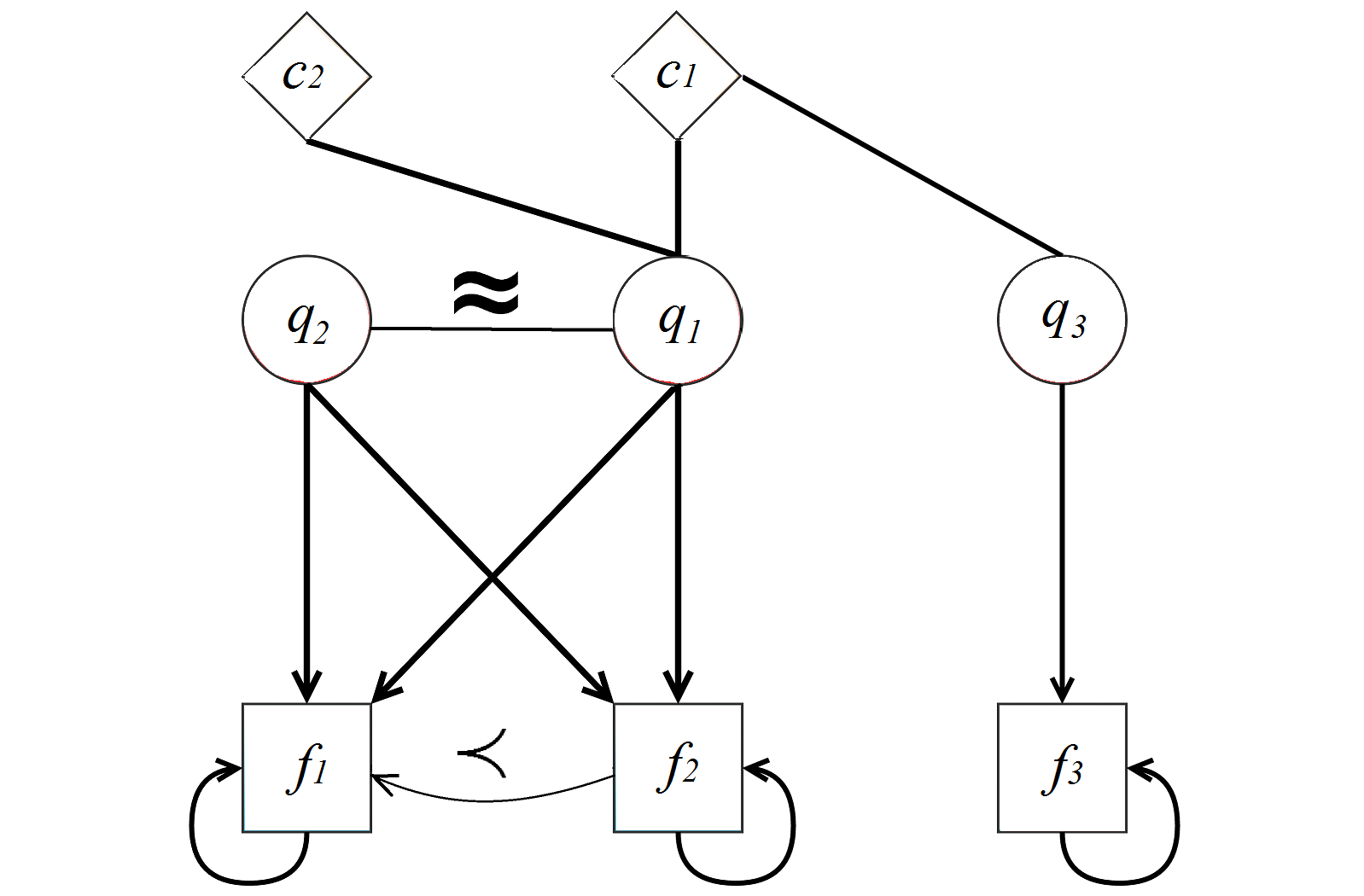}
\end{figure}
 The intentional outcome of ADD describes the
 {\em design elements}, i.e., subsystems, components or connectors.
 It is important to note that the goal of ADD is not the complete automation of the design process, but rather, to provide useful guidance. Thus, the conceptual view reveals only the organisational structure but not the concrete design.

An attribute primitive is a set of design elements that collaboratively perform certain functionalities and meet one or more quality requirements; it is also the minimal combination with respect to these goals \cite{BLMF2002}. Examples of attribute primitives include data router, firewall,  virtual machine, interpreter and so on. ADD prescribes a list of attribute primitives together with descriptions of their properties and side effects (such as in \cite{G01}). Hence, ADD essentially can be viewed as assigning the right attribute primitives to the right requirement combinations.  Note also that an attribute primitive  may be broken down further. 
 \begin{definition}[Attribute primitive]
 An {\em attribute primitive} is a tuple
 $$\A=(\FR, \NR, \C, \prec,\approx, \hookrightarrow)\vspace*{-0.2cm}$$ where $\FR$ is a set of functional requirements, $\NR$ is a set of scenarios, $\C\subseteq 2^{\FR\cup \NR}$ is a set of design constraints, $\prec$ is the dependency relation on $\FR$, $\approx$ is the general scenario relation of $\NR$,   and $\hookrightarrow\subseteq \NR\times \FR$ is a derivation relation.
 \end{definition}
 Let $\A=(\FR, \NR, \C, \prec,\approx,\hookrightarrow)$ be an attribute primitive. We also need the following definition: 
 \begin{itemize}
 \item A {\em requirement} of $\A$ is an element in the set $\R\coloneqq\FR\cup \NR$.
 \item For $r\in \FR$, the {\em dependency set} of $r$ is the set $f(r)\coloneqq\{r'\in \FR\mid r\preceq r'\}$.
 \item For $r\in \NR$, the {\em general scenario} of $r$ is the set $g(r)\coloneqq\{r'\in \NR\mid r\approx r'\}$, i.e., the $\approx$-equivalence class of $r$.
 \item For $r\in \R$, the {\em constraints} of $r$ is the set $c(r)\coloneqq\{t\in \C\mid r\in t\}$.
 \item For $r\in \NR$, the {\em derived set} of $r$ is $d(r) \coloneqq \{s\in \FR\mid r\hookrightarrow s\}$, and for $s\in \FR$, let $d^{-1}(s) \coloneqq \{r\in \NR \mid r\hookrightarrow s\}$
 \end{itemize}
 %
 \begin{definition}[Design element]
 A {\em design element} of $\A$ is a subset $D\subseteq \R$.   
 An {\em decomposition} of $\A$ is a sequence of design elements
$ \vec{D}=(D_1, D_2,\ldots D_k)$
 where $k\geq 1$, $\bigcup_{1\leq i\leq k} D_k = \R$, and each $D_i\cap D_j=\varnothing$ for any $i\neq j$.
 \end{definition}\vspace*{-0.2cm}
 \begin{example}\label{exp:requirement}
Fig.~\ref{fig:requirement} shows an attribute primitive $\A=(\FR, \NR, \C, \prec,\approx,\hookrightarrow)$\vspace*{-0.2cm}
	\begin{itemize}
		\item  $\FR = \{f_1, f_2, f_3\}$ and $\NR = \{q_1, q_2, q_3\}$ are the requirements
		\item  $\C = \{c_1, c_2\}$ where $c_1=\{q_1,q_3\}, c_2=\{q_1\}$
		\item  $f_1 \prec f_2$, $q_1 \approx q_2$, $q_1 \hookrightarrow f_1$, $q_1 \hookrightarrow f_2$ ,$q_2 \hookrightarrow f_1$, $q_2 \hookrightarrow f_2$, $q_3 \hookrightarrow f_3$.
	\end{itemize}
    \end{example}
\subsection{The ADD Procedure}
Essentially ADD provides a means for system decomposition: The entire system is treated as an attribute primitive, which is the input. At each step, the procedure decomposes an attribute primitive $\A$ by identifying a decomposition $(D_1,D_2,\ldots,D_k)$. The process then maps each resulting design element $D_i$ to an attribute primitive $\A_i=({\FR}_i, {\NR}_i, \C_i, \prec_i, \approx_i, \hookrightarrow_i)$, which contains all elements in $D_i$ and  may require some further requirements and constraints. Hence we require that $D_i\subseteq {\FR}_i \cup {\NR}_i$ and $\prec_i$, $\approx_i$, $\C_i$, $\hookrightarrow_i$  are consistent with $\prec$, $\approx$, $\C$ and $\hookrightarrow$ on $D_i$, resp.; in this case we say that $\A_i$ is {\em consistent} with $D_i$. Thus the attribute primitive $\A$ is decomposed into $k$ attribute primitives $\A_1,\A_2,\ldots,\A_k$. On each $\A_i$ where $1\leq i\leq k$, the designer may choose to either terminate the process, or start a new step recursively to further decompose $\A_i$. See Procedure~\ref{prc:ADD}.

\begin{algorithm}
\caption{$\mathsf{ADD}(\A)$ (General Plan)} \label{prc:ADD}
\begin{algorithmic}[1]
\State $(D_1,D_2,\ldots,D_k)\gets \mathsf{Decompose}(\A)$  // compute a rational decomposition of $\A$
\For{$1\leq i\leq k$}
    \State $\A_i \leftarrow $ an primitive attribute consistent with $D_i$
    \If{$\A_i$ needs further decomposition}
        \State $\mathsf{ADD}(\A_i)$
    \EndIf
\EndFor
\end{algorithmic}
\end{algorithm}
We point out that the ADD procedure, as presented by its original proponents, involves numerous additional stages other than the ones described above \cite{WR2006}. The reason we choose this over-simplified description is that we believe these are the steps that could be rigorously presented, and they abstractly  capture in a way most of the steps mentioned in the original informal description. 

 The $\mathsf{Decompose}(\A)$ operation produces a rational decomposition $(D_1,\ldots,D_k)$ of the input attribute primitive $\A$ that satisfies the requirements of $\A$.
We also note that $\mathsf{Decompose}(\A)$ amounts to a crucial step in the ADD process, as the decomposition determines to a large extend how well the quality attributes are met. This step is also a challenging one as interactions among quality attributes create potential conflicts. Thus, in the next section, we define a game model which allows us to automate the $\mathsf{Decompose}(\A)$ operation.

\section{Decomposition Games}\label{sec:game}
\subsection{Requirement Relevance}\label{subsec:interaction}
The $\mathsf{Decompose}(\A)$ procedure  looks for a rational decomposition that meets the requirements in $\A$ as much as possible.
Let $\A=(\FR, \NR, \C, \prec,\approx,\hookrightarrow)$ be an attribute primitive. Relevance between requirements are determined by the relations $\prec,\approx,\hookrightarrow$ and the constraint set $\C$. In the following the {\em Jaccard index} $J(S_1,S_2)$ measures the similarity between two sets $S_1,S_2$ with
\[
    J(S_1,S_2) = \frac{|S_1\cap S_2|}{|S_1\cup S_2|}. 
\]
Intuitively, the relevance of a requirement $r$ to other requirements is influenced by the ``links'' between $r$ and the functional, the non-functional requirements, as well as design constraints.
\begin{definition}[Relevance] Two requirements $r_1,r_2\in \R$ are {\em relevant} if 
\begin{itemize}
\item  $r_1,r_2\in \FR$, and either $d^{-1}(r_1)\cap d^{-1}(r_2)\neq \varnothing$  (derived from some common scenario), or $f(r_1)\cap f(r_2)\neq \varnothing$ (relevant through dependency), or $c(r_1)\cap c(r_2)\neq \varnothing$ (share some common design constraints).
\item  $r_1,r_2\in \NR$, and either $r_1\approx r_2$ (instantiate the same general scenario), or $d(r_1)\cap d(r_2)\neq \varnothing$ (jointly derives some functionality) or $c(r_1)\cap c(r_2)\neq \varnothing$.
\item $r_1\in \FR$, $r_2\in \NR$, and  either $f(r_1) \cap d(r_2) \neq \varnothing$ ($r_1$ depends on a requirement that is derived from $r_2$), or $c(r_1)\cap c(r_2) \neq \varnothing$.
\end{itemize}
\end{definition}
If two requirements are relevant, their relevance  depends on overlaps between their derived sets, dependency sets and constraints. If two requirements are not relevant, then we regard them as having a negative relevance $\lambda<0$, which represents a ``penalty'' one pays when two irrelevant requirements get in the same design element.
\begin{definition}
 We define the {\em relevance index} $\sigma(r_1,r_2)$ of  $r_1\neq r_2\in \R$ as follows:
\begin{enumerate}
\item if two functional requirements $r_1,r_2\in \FR$ are relevant, then\vspace*{-0.2cm}
\[
    \sigma(r_1,r_2) = \alpha J(d^{-1}(r_1),d^{-1}(r_2)) + \beta J(f(r_1),f(r_2)) + \gamma J(c(r_1),c(r_2));\vspace*{-0.2cm}
\]
\item if two scenarios $r_1,r_2\in \NR$ are relevant, then\vspace*{-0.2cm}
\[
    \sigma(r_1,r_2) = \beta J(d(r_1),d(r_2)) + \gamma J(c(r_1),c(r_2));\vspace*{-0.2cm}
\]
\item If $r_1\in \FR$ and $r_2\in \NR$ are relevant, then\vspace*{-0.2cm}
\[
    \sigma(r_1,r_2) = \sigma(r_2,r_1)= \beta J(f(r_1), d(r_2)) + \gamma J(c(r_1),c(r_2));\vspace*{-0.2cm}
\]
\item otherwise, $\sigma(r_1,r_2)=\lambda$
\end{enumerate}\vspace*{-0.1cm}
The constants $\alpha,\beta,\gamma$ are positive real numbers, that represent weights on the overlaps in $d_1,d_2$'s generated sets,
dependency sets and constraints, respectively. We require $\alpha\!+\!\beta\!+\!\gamma\!=\!1$.
\end{definition}
For simplicity, we do not include these constants in expressing the function $\sigma$, and all subsequent notions that depend on $\sigma$ (thus saving us from writing ``$\sigma(r_1,r_2,\alpha,\beta,\gamma,\lambda)$'').

\begin{example}\label{exp:relevance} Continue from $\A$ in Example~\ref{exp:requirement}.
To emphasise the  non-functional requirements we give a larger weight to $\alpha$, setting $\alpha = 0.5$, $\beta=0.4$, $\gamma=0.1$. We also set $\lambda=-0.5$. Then
$\sigma(r_1,r_2)\!=\!0.4\!\times\!\frac{2}{2}\!=\!0.4$
for any $(r_1,r_2)\!\in\!\{(q_1,q_2),(q_3,f_3)\}\!\cup\!(\{q_1,q_2\}\!\times\!\{f_1,f_2\})$; $\sigma(q_1, q_3)\!=\!0.1\!\times\!\frac{1}{2}\!=\!0.05$; $\sigma(f_1, f_2)\!=\!0.5 \times \frac{2}{2} + 0.4 \times \frac{2}{2} = 0.9$;
and relevance between any other pairs is $-0.5$. Fig.~\ref{fig:relevance}(a) illustrates the (positive) relevance in a weighted graph.
\end{example}
\subsection{Decomposition Games} \label{subsec:game}
We employ notions from coalition games to define what constitutes a {\em rational} decomposition.
In a coalition game,  players cooperate to form coalitions which achieve certain collective payoffs \cite{MO2008}.\vspace*{-0.1cm}
\begin{definition}[Coalition game]
A {\em coalition game} is a pair $(P,\nu)$ where
 $P$ is a finite set of players, and each subset $D\subseteq P$ is  a {\em coalition};
 $\nu: 2^P \to \mathbb{R}$ is a {\em payoff function} associating every  $D\subseteq P$ a real value $\nu(D)$ satisfying $\nu(\varnothing)=0$.\vspace*{-0.1cm}
\end{definition}
This provides the set up for decompositions: \  Imagine a coalition game consisting of $|\R|$ agents as players, where each agent is in charge of a different requirement. The players form coalitions which correspond to sets of requirements, i.e., design elements. The payoff function would associate with every coalition a numerical value, which is the payoff gained by each member of the coalition. Therefore, an equilibrium of the game amounts to a decomposition with the right balance among all requirements -- this would be regarded as a rational decomposition.

It remains to define the payoff function.  Naturally, the payoff of a coalition is determined by the {\em interactions} among its members. Take $r_1,r_2\in D$. If one of $r_1,r_2$ is a functional requirement, then their interaction is defined by their relevance index $\sigma(r_1,r_2)$, as higher relevance means a higher level of interaction. Suppose now both $r_1,r_2$ are scenarios (non-functional). Then the interaction becomes more complicated, as a quality attribute may enhance or defect another quality attribute. In \cite[Chapter 14]{WKJ2013}, the authors identified effects acting from one quality attribute to another, which is expressed by a {\em tradeoff matrix} $T$:
\begin{itemize}
\item $T$ has dimension $m\times m$ where $m$ is the number of general scenarios
\item  For $ i\neq j \in \{1,\ldots,m\}$, the $(i,j)$-entry $T_{i,j}\in \{-1,0,1\}$.
\end{itemize}
Let $g_1,g_2,\ldots,g_m$ be general scenarios. $T_{i,j}=1$ (resp. $T_{i,j}=-1$) means  $g_1$ has a positive (resp. negative) effect on $g_2$, $T_{i,j}=0$ means  no effect.
E.g., the tradeoff matrix  defined on six common quality attributes is:
{\[
\begin{array}{c|cccccc}
                  & \text{\scriptsize \ Perfo.\ }  & \text{\scriptsize Modif.\ } & \text{\scriptsize Secur.\ } & \text{\scriptsize Avail.\ } & \text{\scriptsize Testa.\ } & \text{\scriptsize Usabi.\ } \\ \hline
\text{\scriptsize Performance} &0 & -1 & 0 &  0 & 0 & -1 \\
\text{\scriptsize Modifiability} &-1 & 0 & 0 & 1 & 1 & 0 \\
\text{\scriptsize Security} &-1 & 0 & 0 & 1 & -1 & -1 \\
\text{\scriptsize Availability} &0 & 0 & 0 & 0 & 0 & 0 \\
\text{\scriptsize Testability} &0 & 1 & 1 & 1 & 0 & 1 \\
\text{\scriptsize Usability} &-1 & 0 & 0 & 0 & -1 & 0
\end{array}
\]}
Note that the matrix is not necessarily symmetric: The effect from $g_1$ to $g_2$ may be different from the effect from $g_2$ to $g_1$. For example, an improvement in system performance may not affect security, but increasing security will almost always adversely impact performance. we assume that the matrix $T$ is given prior to ADD; this assumption is reasonable as there is an effective map from any general scenario to the main quality attribute it tries to capture. We use this tradeoff matrix to define the interaction between two scenarios in $\NR$.
\begin{definition}[Coalitional relevance]
For a coalition $D\subseteq \R$ and $r\in D$, the {\em coalitional relevance} of $r$ in $D$ is the total relevance from $r$ to all other requirements in $D$, i.e., $\rho(r,D)=\sum_{s\in D, s\neq r} \sigma(r,s).$
\end{definition}
\begin{definition}[Effect factor]
For scenarios $r_1,r_2$ in the same coalition $D$, the {\em effect factor} from $r_1$ to $r_2$ expresses the  effect of $r_1$ towards $r_2$, i.e.,
\[
    \varepsilon(r_1,r_2,D) = \begin{cases}
        -|\rho(r_1,D)| & \text{ if $T(g(r_1),g(r_2))=-1$} \\
        0 & \text{ if $T(g(r_1),g(r_2))=0$} \\
        \rho(r_1,D) & \text{ if $T(g(r_1),g(r_2))=1$}\vspace*{-0.2cm}
    \end{cases}
\]
\end{definition}
We are now ready to define the interaction between two scenarios $r_1,r_2\in \R$.
\begin{definition}[Interaction]
Let $r_1\!\neq\!r_2\!\in\!\R$ be requirements. The {\em interaction} between $r_1,r_2$ is simply the relevance $\sigma(r_1,r_2)$ if one of $r_1,r_2$ is functional; otherwise (both $r_1,r_2$ are non-functional), it is the sum of their effect factors, i.e., \vspace*{-0.2cm}
\[
  \text{the {\em interaction} } \nu(r_1,r_2,D) \coloneqq \begin{cases}
        \sigma(r_1,r_2) & \text{ if $\{r_1,r_2\}\cap \F\neq \varnothing$}\\
        \varepsilon(r_1,r_2,D)+\varepsilon(r_2,r_1,D) & \text{ otherwise}
        \end{cases}
\]
The {\em coalition utility} $\nu(D)$ of any coalition $D\subseteq \R$ is defined as the sum of interactions among all pairs of requirements in the coalition, i.e.,\vspace*{-0.2cm}
\[
    \nu(D) = \sum_{r_1\neq r_2\in D} \nu(r_1,r_2,D)
\]
\end{definition}
\begin{definition}[Decomposition games (DG)]
Let $\A=(\FR, \NR, \C,\prec, \approx, \hookrightarrow)$ be an attribute primitive. The {\em DG} $G_\A$ is the coalition game $(\FR\cup \NR, \nu)$ where $\nu:2^{\FR\cup \NR}\to \mathbb{R}$ is the coalition utility function.
\end{definition}
\begin{figure}
\begin{center}\caption{\label{fig:relevance} \small (a) Weights on the edges are relevance (function $\sigma$) between requirements in Example~\ref{exp:relevance}; the diagram omits the negative weighted pairs. \ (b) The decomposition $\{S_1,S_2\}$ is a solution with $\nu(S_1)=2.5$, $\nu(S_2)=0.4$. The coalition $C$ has $\nu(C)=-1$}
\end{center} 
 \includegraphics[width=12.0cm]{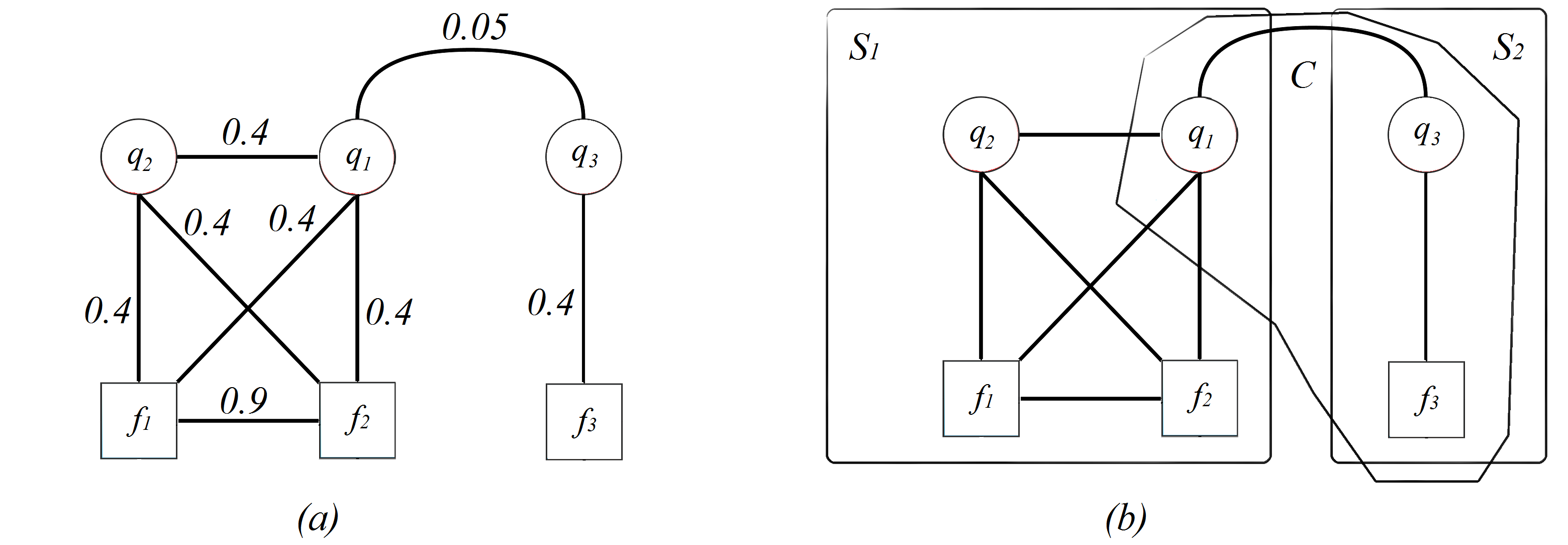}
\end{figure}

\begin{example}[Coalition Utility]\label{exp:utility}
Continue the setting in Example~\ref{exp:relevance}. Let the general scenarios be $g_1=\{q_1,q_2\}$ and $g_2=\{q_3\}$. We assume matrix $T$ specifies $T(g_1,g_2)=1$ and $T(g_2,g_1)=-1$. Consider the coalition $C=\{q_1,q_3,f_3\}$. We have:
\[
\rho(q_1,C)=0.05-0.5=-0.45; \text{ and }\rho(q_3,C)=0.4+0.05=0.45.
\]
So $\varepsilon (q_1,q_3,C)=-0.45\times 1=-0.45$ and $\varepsilon(q_3,q_1,C)=0.45\times (-1)=-0.45$.

Thus $\nu(q_1, q_3, C) =-0.45-0.45= -0.9$.
Therefore, $\nu(C) = \sigma(q_1,f_3)+\sigma(q_3,f_3)+(-0.9)=(-0.5)+0.4+(-0.9) = -1$ but $\nu(C \setminus \{q_1\}) = \nu(\{q_3,f_3\})=\sigma(q_3,f_3)=0.4$; See Fig.~\ref{fig:relevance}(b).

As it has turned out, despite the fact that matrix $T$ indicates $q_1$ will act positively to $q_3$, and that $q_1,q_3$ have a positive (0.05) relevance,  adding $q_1$ into the coalition of $\{q_3,f_3\}$ drastically decreases the coalition utility.
\end{example}
\subsection{Solution Concept}\label{subsec:solution}
We point out some major differences between decomposition and typical coalition games: Firstly, in  coalition game theory, one normally assumes the axioms of superadditivity ($\nu(D_1\cup D_2) \geq \nu(D_1)+\nu(D_2)$) and monotonicity ($D_1\subseteq D_2 \Rightarrow \nu(D_1)\leq \nu(D_2)$) which would obviously not hold for decomposition as players may counteract with each other, reducing their combined utility. Secondly, the typical solution concepts in coalition games (such as Pareto optimality,  and Shapely value) focus on distribution of payoffs to each individual player assuming a grand coalition consisting of all players. In decomposition such a grand coalition is normally not desirable and the focus is on the overall payoff of each coalition $D$, rather than the individual requirements.
The above differences motivate us to consider a different solution concept of DG $G_\A$. At any instance of the game, the players form a decomposition $(D_1,D_2,\ldots,D_k)$. We assume 
that the players may perform  two collaborative strategies:
\begin{enumerate}
\item {\em Merge strategy}:  Two coalitions may choose to merge  if they would obtain a higher combined payoff.
\item {\em Bind strategy}: Players within the same coalition may form a sub-coalition if they would obtain a higher payoff.
\end{enumerate}
\begin{example}[A Dilemma]\label{exp:dilemma}
We present an example demonstrating the dynamics of a DG $G_\A$. This example shows a real-world dilemma: As a coalition pursues higher utility through expansion (merging with others), 
it may be better to choose a ``less-aggressive'' expansion strategy over the ``more-aggressive'' counterpart, even though the latter clearly brings a higher payoff. Assume the following  situation (which is clearly plausible in an attribute primitive):
\begin{itemize}
	\item $\R=\{d_1,d_2,d_3,d_4\}$ where $\NR=\{d_1,d_4\}$ and $d_1\not\approx d_4$.
    \item We set $\sigma(\{d_1, d_2\})=\sigma(\{d_1, d_3\})=\sigma(\{d_2,d_3\})=0.1$, and $\sigma(\{d_2, d_4\})=0.5$.
	\item The tradeoff matrix indicates $T(g(d_1), g(d_4))=0$, $T(g(d_4), g(d_1))=-1$.
	\item And, $d_1$ and $d_4$ are irrelevant, namely $\sigma(d_1, d_4)=\lambda = -0.7$. \vspace*{-0.2cm}
\end{itemize}
Suppose we start with the decomposition $\{S=\{d_1,d_2\}, \{d_3\}, \{d_4\}\}$.  Then $\nu(S)=\rho(d_1, d_2,S)=\nu(d_1, d_2,S)=0.1$.  Coalition $S$  has two merge strategies:
\begin{itemize}
	\item[(1)] For $S_1 = S\cup \{d_3\}$: $\nu(d_1, d_2,S_1) = \sigma(d_1, d_2) = 0.1$, $\nu(d_1, d_3,S_1)\!=\!\sigma(d_1, d_3)\!=\!0.1$, $\nu(d_2, d_3, S_1)\!=\!\sigma(d_2, d_3)\!=\!0.1$. Thus $\nu(S_1)\!=\!0.3$.
	\item[(2)] For $S_2 = S\cup \{d_4\}$: $\nu(d_1,d_4,S_2)\!=\!\varepsilon(d_4, d_1,S_2)\!=\!-0.7\!+\!0.5\!=\!-0.2$, $\nu(d_1, d_2,S_2)\!=\!\sigma(d_1, d_2)\!=\!0.1$ , $\nu(d_2, d_4,S_2)\!=\!\sigma(d_2, d_4)\!=\!0.5$. Hence $\nu(S_2)\!=\!0.1\!-\!0.2\!+\!0.5\!=\!0.4$
\end{itemize}
Merging with $\{d_4\}$ clearly results in a higher payoff for the combined coalition. However, if this merge happens, as $\nu\left(\{d_2, d_4\}\right)=0.5>\nu(S_2)=0.4$, $d_2$ and $d_4$ would choose to bind together, hence leaving $S_2$. This would be undesirable if $d_1$ is a  critical non-functional requirement for $d_2$.
\end{example}
Example~\ref{exp:dilemma} shows that a solution concept would be a decomposition where no ``expansion'' nor ``crumbling'' occur to any coalition. Formally, we define the following solution concepts:
\begin{definition}[Solution] Let $\vec{D}=(D_1,\ldots,D_k)$ be a decomposition of $\A$.
\begin{enumerate}
\item A coalition $D\subseteq \R$ is {\em cohesive} if for all $C\subseteq D$, $\nu(C) < \nu(D)$;
 $\vec{D}$ is {\em cohesive} if so is every $D_i$.
\item A coalition $D_i$ is {\em expansion-free} with respect to $\vec{D}$ if $\max\{\nu(D_i),\nu(D_j)\}>\nu(D_i\cup D_j)$;
$\vec{D}$ is {\em expansion-free} if so is every $D_i$.
\end{enumerate}
A {\em solution} of a DG is a decomposition that is both cohesive and expansion-free.
\end{definition}

\begin{example}[Solution]\label{exp:solution} Continue from Example~\ref{exp:utility}, the utilities for
\[
S_1 = \{q_1, q_2, f_1, f_2\} \text{ \qquad  and \qquad } S_2 = \{q_3, f_3\} \text{ \qquad are }\colon\]
%
\begin{itemize}
\item $S_1$: $\nu(q_1, q_2,S_1)= 0$,
$\nu(q_1, f_1,S_1)=\nu(q_1, f_2,S_1)=\nu(q_2, f_1,S_1)=0.4$, \\
$\nu(q_1, f_2,S_1)=0.4$, $\nu(f_1, f_2,S_1)=0.9$. Thus $ \nu(S_1)=0.4\times 4+0.9=2.5$
	
\item $S_2$: $w_(q_3, f_3,S_2)=0.4$. Thus $\nu(S_2)=0.4$
\end{itemize}
Both $S_1$ and $S_2$ are cohesive. Furthermore, we have
$\nu(q_1,q_3,\R)\!=\!0.75\!-\!1.05\!=\!-0.3$ and $\nu(q_2,q_3,\R)=0.2-1.05=-0.85$.
Thus $\nu(\R)\!=\!2.9\!-\!0.5\!\times\!6\!-\!0.85\!-\!0.3\!=\!-1.45$. Consequently, $\{S_1,S_2\}$ is also expansion-free, and is thus a solution of the game.
\end{example}

A solution of a DG $G_\A$ corresponds to a rational decomposition of the attribute primitive  $\A$. As shown by Thm.~\ref{thm:existence}, any attribute primitive admits a solution, and rather expectedly, a solution may not be unique.
\begin{theorem}[Solution Existence]\label{thm:existence}
There exists a solution in any DG $G_\A$.
\end{theorem}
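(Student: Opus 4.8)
The plan is to construct a solution explicitly by a greedy ``peeling'' procedure that repeatedly extracts a coalition of maximum utility, and then to verify that the resulting decomposition is both cohesive and expansion-free. Throughout, all utilities are computed with the single fixed function $\nu$ of the game $G_\A$; I will not pass to sub-games, since the relevance index $\sigma$ (and hence $\nu$) is defined relative to the whole primitive $\A$ and need not be preserved under restriction.

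Concretely, set $R_1=\R$ and, having defined $R_i\neq\varnothing$, let $M_i=\max\{\nu(C)\mid \varnothing\neq C\subseteq R_i\}$ and choose $D_i\subseteq R_i$ with $\nu(D_i)=M_i$; put $R_{i+1}=R_i\setminus D_i$ and repeat until $R_{i+1}=\varnothing$. Since $\R$ is finite and each $D_i$ is nonempty, the procedure stops after finitely many steps and outputs a decomposition $\vec D=(D_1,\dots,D_k)$ whose parts are nonempty, pairwise disjoint, and cover $\R$. Note that $M_i\ge\nu(\{r\})=0$ for any $r$, so a nonempty maximiser always exists. The freedom in choosing $D_i$ among the maximisers of $\nu$ on $R_i$ is exactly what I will exploit to obtain the two strict inequalities.

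For cohesion I would take each $D_i$ to be \emph{inclusion-minimal} among the maximum-utility coalitions of $R_i$. Then for every nonempty proper subset $C\subsetneq D_i$ we have $\nu(C)\le M_i=\nu(D_i)$, and equality is impossible, since it would make $C$ a maximum-utility coalition strictly inside $D_i$, contradicting minimality; hence $\nu(C)<\nu(D_i)$ and $D_i$ is cohesive, so $\vec D$ is cohesive. For expansion-freedom, take $i<j$: both $D_i$ and $D_j$ are subsets of $R_i$, and $D_i$ attains the maximum of $\nu$ over all subsets of $R_i$, so $\nu(D_j)\le M_i$ and $\nu(D_i\cup D_j)\le M_i=\nu(D_i)$. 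Thus $\max\{\nu(D_i),\nu(D_j)\}=\nu(D_i)\ge\nu(D_i\cup D_j)$, which is the required inequality up to strictness.

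The hard part will be promoting this last ``$\ge$'' to the strict ``$>$'' demanded by expansion-freedom, i.e.\ ruling out the tie $\nu(D_i\cup D_j)=M_i$. The difficulty is a genuine tension: inclusion-minimality of $D_i$ is what delivers cohesion, but strict expansion-freedom would follow at once if $D_i$ were instead inclusion-\emph{maximal} among the maximisers, since then the strictly larger $D_i\cup D_j\subseteq R_i$ could not also attain $M_i$. Reconciling the two is the crux. I would attack it by showing that a tie $\nu(D_i\cup D_j)=\nu(D_i)$ forces the later block $D_j$ to be absorbed into $D_i$ with exactly zero net effect, and then argue from the form of $\nu$ --- the relevance indices $\sigma$ together with the effect factors $\varepsilon$ --- that such an exact cancellation cannot occur for a disjoint block that itself carries a maximiser, or else re-choose $D_i$ (simultaneously minimal yet non-extendable) once the maximisers are shown to have the right lattice structure. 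This single step, separating the two strict inequalities, is where I expect essentially all of the work to lie; termination, covering, and the weak inequalities above are routine.
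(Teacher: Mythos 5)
Your construction is exactly the paper's: it too peels off inclusion-minimal maximum-utility coalitions from the residual requirement set, and your cohesion argument is the same minimality argument the paper gives. The comparison therefore turns entirely on the step you flag as the crux, and there the situation is this: the gap you identify is genuine, and the paper's own proof does not actually close it either. For $i<j$ one has $D_j\subseteq \R\setminus(D_1\cup\cdots\cup D_{i-1})$, hence $\nu(D_i)\geq\nu(D_j)$ and $\nu(D_i)\geq\nu(D_i\cup D_j)$ by maximality of $D_i$; a violation of expansion-freedom therefore forces exactly the tie $\nu(D_i\cup D_j)=\nu(D_i)$, which contradicts neither the maximality nor the inclusion-minimality of $D_i$. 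The paper instead asserts that $\nu(D_i\cup D_j)>\min\{\nu(D_i),\nu(D_j)\}$ would contradict maximality of $D_{\min\{i,j\}}$, but that hypothesis is neither the negation of expansion-freedom nor sufficient for the claimed contradiction (take $\nu(D_i)=5$, $\nu(D_j)=3$, $\nu(D_i\cup D_j)=4$: the hypothesis holds, yet $D_i$ is still maximal and the pair is in fact expansion-free). You have correctly isolated the one point at which the published argument is unsound.

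Whether the tie case can be repaired depends on how the definition is read. The paper's own Proposition 1 certifies $\vec{C}$ as expansion-free precisely on the grounds that $\nu(\FR)=0.3=\nu(C_1)=\max\{\nu(C_1),\nu(C_2)\}$, i.e., in practice it treats the expansion-freedom inequality as non-strict; under that reading your ``routine'' inequality $\nu(D_i\cup D_j)\leq\nu(D_i)$ already finishes the proof and nothing further is needed. Under the strict reading as literally written, neither of your proposed repairs looks workable in general: exact cancellation is not structurally excluded by the definitions of $\sigma$ and $\varepsilon$ (the interaction between two scenarios is exactly $0$ whenever both tradeoff entries $T(g(r_1),g(r_2))$ and $T(g(r_2),g(r_1))$ vanish, as happens in the paper's own Example 5), and one can arrange two disjoint minimal maximisers whose union again attains the maximum, in which case no re-choice of $D_i$ among the maximisers avoids the tie. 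So the honest verdict is that your argument is complete for the non-strict inequality the paper actually uses, while for the strict inequality the tie case needs either a genericity/perturbation hypothesis or a different idea; your instinct that all of the remaining work hides in that single inequality is exactly right.
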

\begin{proof}
We show existence of a solution by construction. Let $(D_1,D_2,\ldots,D_k)$ be a longest sequence such that for any $i=1,\ldots,k$, $D_i$ is a minimal coalition with maximal utility in $\R\setminus \{D_1,\ldots,D_{i-1}\}$ (i.e., $\forall D\subseteq \R\setminus \{D_1,\ldots,D_{i-1}\}: \nu(D_1)\geq \nu(D)$ and $\forall D\subseteq D_1: \nu(D_1)>\nu(D)$).

We claim that $\vec{D}=(D_1,\ldots,D_k)$ is a solution in $G_\A$. Indeed, for any $1\leq i\leq k$, any proper subset of $D_i$ would have payoff strictly smaller than $\nu(D_i)$ by minimality of $D_i$. Thus $\vec{D}$ is cohesive. Moreover, if $\nu(D_i\cup D_j)>\min\{\nu(D_i),\nu(D_j)\}$ for some $i\neq j$, then $D_{\min\{i,j\}}$ does not have maximal utility in $\R\setminus\{D_1,\ldots,D_{\min\{i,j\}-1}\}$. Hence $\vec{D}$ is expansion-free. \qed
\end{proof}

\begin{proposition}\label{prp:unique}
The solution of a DG may not be unique.
\end{proposition}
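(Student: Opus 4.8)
The plan is to establish non-uniqueness by exhibiting a single decomposition game that admits two distinct solutions; since Thm.~\ref{thm:existence} already guarantees existence, it suffices to build one symmetric instance and read off two incomparable decompositions from its symmetry. To keep the computation transparent I would take all requirements to be functional, so that by the definition of interaction every $\nu(r_1,r_2,D)$ collapses to the relevance index $\sigma(r_1,r_2)$ and the tradeoff matrix and effect factors play no role. Concretely, I would let $\R=\{a,b,c,d\}\subseteq\FR$ and arrange the relevance so that its positive part forms a $4$-cycle: $\sigma(a,b)=\sigma(b,c)=\sigma(c,d)=\sigma(d,a)=p$ for some $p>0$, while the two diagonal pairs are irrelevant, $\sigma(a,c)=\sigma(b,d)=\lambda<0$. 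Following the style of Example~\ref{exp:dilemma}, I would simply posit these values rather than spell out a full attribute primitive realising them.

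I would then verify that $\vec D=(\{a,b\},\{c,d\})$ is a solution. Cohesion is immediate: each block is a pair with utility $\nu(\{a,b\})=\nu(\{c,d\})=p>0$, whereas every proper subset is a singleton or empty and has utility $0$. For expansion-freedom the only merge to check is $\{a,b\}\cup\{c,d\}=\R$, whose utility is $\nu(\R)=4p+2\lambda$, summing the four cycle edges and the two diagonals; choosing $\lambda$ sufficiently negative, say $\lambda<-\tfrac32 p$ (e.g. $p=0.4$, $\lambda=-0.7$, giving $\nu(\R)=0.2$), yields $\nu(\R)<p=\max\{\nu(\{a,b\}),\nu(\{c,d\})\}$, so no merge is profitable. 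Finally, by the rotational symmetry of the cycle the decomposition $\vec D'=(\{b,c\},\{d,a\})$ satisfies exactly the same inequalities and is therefore also a solution; since $\vec D\neq\vec D'$, the solution is not unique.

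The routine part is the arithmetic of the two inequalities; the only point requiring care is the choice of $\lambda$, which must be negative enough to defeat the merge into the grand coalition while the positive weight $p$ still makes each pair cohesive -- this is exactly where the bound $\lambda<-\tfrac32 p$ comes from. A secondary, and in this framework minor, obstacle is whether the posited $\sigma$-values are realised by an actual attribute primitive; this can be addressed, if desired, by instantiating the irrelevant diagonal pairs (yielding $\lambda$) and tuning the Jaccard overlaps on the cycle edges to produce the common value $p$, but as in Example~\ref{exp:dilemma} it is harmless to state the relevances directly.
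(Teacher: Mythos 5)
Your argument is correct, and it reaches the conclusion by a genuinely different construction than the paper's. You take a symmetric $4$-cycle on $\{a,b,c,d\}$ with weight $p$ on the cycle edges and $\lambda$ on the diagonals, and obtain two solutions $(\{a,b\},\{c,d\})$ and $(\{b,c\},\{d,a\})$ that are images of each other under rotation; the verification reduces to $0<p$ for cohesion and $4p+2\lambda<p$ for expansion-freedom, and your bound $\lambda<-\tfrac32 p$ is exactly right. The paper instead uses six functional requirements with $\nu(\{d_i,d_j\})=0.1$ on pairs inside $\{1,2,3,4\}$ and inside $\{4,5,6\}$ and $-0.1$ across, and exhibits a $(3,3)$ solution with utilities $(0.3,0.3)$ alongside a $(4,2)$ solution with utilities $(0.6,0.1)$. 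What your version buys is economy and an automatic second solution by symmetry; what the paper's version buys is a strictly stronger form of non-uniqueness, namely two solutions that are not isomorphic --- they differ in coalition sizes and in the maximal coalition utility, which is the quantity the later decision problem $\mathsf{DG\_D}$ asks about. Your two solutions, being congruent, do not witness that. A further point in your favour: expansion-freedom as defined requires the strict inequality $\max\{\nu(D_i),\nu(D_j)\}>\nu(D_i\cup D_j)$, and your merge check is strict ($0.2<0.4$), whereas the paper's own check for its $(3,3)$ decomposition rests on $\nu(\FR)=0.3=\nu(C_1)$, which only meets the definition non-strictly. Finally, like the paper (and like Example~\ref{exp:dilemma}), you posit the $\sigma$-values rather than exhibiting an attribute primitive realising them; your closing remark that the cycle can be realised by pairwise constraints (giving $p$ as a fraction of $\gamma$) with the diagonals irrelevant is the right way to discharge that, since $\lambda$ is a free negative parameter and may be chosen below $-\tfrac32 p$.
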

\begin{proof}
Let $\A=(\FR,\NR,\C, \prec,\approx,\hookrightarrow)$ be an attribute primitive where $\NR=\varnothing$ and $\FR=\{d_1,d_2,\ldots,d_6\}$. We may define $\C, \prec,\approx,\hookrightarrow$ in such a way that
\begin{itemize}
\item For all $\{i,j\}\subseteq \{1,2,3,4\}$ and $\{i,j\}\subseteq \{4,5,6\}$, $i\neq j \Rightarrow \nu(\{d_i,d_j\})=0.1$
\item For all $i\in \{1,2,3\}$, $j\in \{5,6\}$, $\nu(\{d_i,d_j\})=-0.1$
\end{itemize}
Consider $\vec{C}\!=\!\{C_1\!=\!\{d_1,d_2,d_3\},C_2\!=\!\{d_4,d_5,d_6\}\} \text{ and } \vec{D}\!=\!\{D_1\!=\!\{d_1,d_2,d_3,d_4\},$ $D_2\!=\!\{d_5,d_6\}\}$.
Note that $\nu(C_1)=0.3$ and $\nu(C_2) = 0.3$; $\vec{C}$ is cohesive and $\vec{C}$ is expansion-free as $\nu(\FR)=0.3=\nu(C_1)$.
Note also that  $\nu(D_1)=0.6$ and $\nu(D_2)=0.1$; $\vec{D}$ is cohesive and $\vec{D}$ is expansion-free as $\nu(D_1)>\nu(\FR)$\qed
\end{proof}

\section{Solving Decomposition Games} \label{sec:algorithm}
Based on our game model,  the operation $\mathsf{Decompose}(\A)$ in Procedure~\ref{prc:ADD} is reduced to  the following $\mathsf{DG}$ problem:

\qquad INPUT: An attribute primitive $\A=(\FR,\NR, \C,\prec,\approx,\hookrightarrow)$

\qquad OUTPUT: A solution $\vec{D}=(D_1,D_2,\ldots,D_k)$ of the game $G_\A$

\noindent Here, we measure computational complexity  with respect to the number of requirements in $\FR\cup \NR$. The proof of Theorem~\ref{thm:existence} already implies an algorithm for solving the $\mathsf{DG}$ problem: check all subsets of $\R$ to identify a minimal set with maximal utility; remove it from $\R$ and repeat. However, it is clear that this algorithm takes exponential time. We will demonstrate below that a polynomial-time algorithm for this problem is, unfortunately, unlikely to exist.

We consider the decision problem $\mathsf{DG\_D}$: {\em Given $\A$ and a number $w>0$,  is there a solution $\vec{D}$ of $G_\A$ in which the highest utility of a coalition reaches $w$?} Recall that the payoff function $\nu$ of $G_\A$ is defined assuming constants $\alpha,\beta,\gamma>0$ and $\lambda<0$. The theorem below holds assuming $\lambda<-\gamma$.
\begin{theorem}\label{thm:NPComplete}
The $\mathsf{DG\_D}$ problem is NP-hard.
\end{theorem}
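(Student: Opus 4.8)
The plan is to reduce from \textsc{Clique}: given a graph $G=(V,E)$ and an integer $k$, decide whether $G$ has a clique of size $k$. Two preliminary simplifications make the target manageable. First, I would reduce the given ``solution''-based question to the purely combinatorial one \emph{``is there a coalition $D\subseteq\R$ with $\nu(D)\ge w$?''}. This is justified by the construction in the proof of Thm.~\ref{thm:existence}: its first block $D_1$ is a minimal coalition attaining the global maximum $\nu^\ast:=\max_{D\subseteq\R}\nu(D)$, so that construction yields a solution whose highest coalition utility is exactly $\nu^\ast$; conversely, every coalition in every decomposition has utility at most $\nu^\ast$. Hence a solution whose highest utility reaches $w$ exists iff $\nu^\ast\ge w$, i.e. iff some coalition has utility at least $w$. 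Second, I would take an attribute primitive with $\NR=\varnothing$, so that every pair is functional and $\nu(r_1,r_2,D)=\sigma(r_1,r_2)$; this removes the tradeoff matrix, the effect factors and the coalition-dependence of $\nu$, leaving $\nu(D)=\sum_{\{r_1,r_2\}\subseteq D}\sigma(r_1,r_2)$, a plain sum of pairwise relevances.

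I would then encode adjacency through shared design constraints: give each edge $e=\{u,v\}\in E$ its own constraint $c_e=\{u,v\}$, and pad each vertex's constraint family with private constraints so that every vertex has the same large constraint-degree $M$. With $\prec$ and $\hookrightarrow$ empty, the $f$- and $d^{-1}$-relevance terms vanish, so $\sigma$ reduces to its constraint term $\gamma J$; adjacent vertices share exactly one constraint and non-adjacent ones share none, giving $\sigma(u,v)=p:=\gamma/(2M-1)$ for every edge and $\sigma(u,v)=\lambda$ for every non-edge. For a coalition $D$ of size $s$ containing $m$ edges, $\nu(D)=m\,p+\bigl(\tbinom{s}{2}-m\bigr)\lambda$. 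Choosing $M=\binom{|V|}{2}$ makes $\binom{|V|}{2}p<\gamma$, and since $\lambda<-\gamma$ the penalty of a single non-edge dominates all positive relevance available: any coalition containing a non-edge satisfies $\nu(D)\le\binom{|V|}{2}p+\lambda<0$. Consequently every coalition of positive utility is a clique, a clique of size $s$ has utility $\binom{s}{2}p$ (strictly increasing in $s$), and hence $\nu^\ast=\binom{\omega}{2}p$ with $\omega=\omega(G)$. Setting $w=\binom{k}{2}p$ yields $\nu^\ast\ge w\iff\omega\ge k$, which is exactly the \textsc{Clique} answer, and the construction is polynomial in $|V|+|E|$.

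The main obstacle is the step ``penalty dominance $\Rightarrow$ optimal coalition is a clique'', together with engineering \emph{uniform} edge weights cleanly. Padding the constraint families forces private (dummy) requirements, which create small spurious relevances, so I would need to check that such dummies never appear in a coalition reaching $w$: a single vertex--dummy pair has utility $\gamma/M<\binom{k}{2}p=w$ once $k\ge 3$, while any coalition mixing a dummy with a second genuine vertex incurs a dominating $\lambda$ penalty and becomes negative. The trivial cases $k\le 2$ are decided directly in polynomial time and excluded. The remaining work is bookkeeping: confirming the two preliminary reductions above and the vanishing of the non-constraint relevance terms, after which NP-hardness of $\mathsf{DG\_D}$ follows from that of \textsc{Clique}.
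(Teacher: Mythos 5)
Your reduction is correct in substance, but it is genuinely different from the paper's. Both start from \textsc{Clique}, yet the paper builds an all-non-functional instance: each vertex $i$ of $H$ becomes a ``meta-node'' of $n$ scenarios glued together by pairwise constraints, and adjacency is encoded through the tradeoff matrix ($T=+1$ on one designated cross-meta-node pair per edge, $T=-1$ on non-edges, $T=0$ otherwise), so every pairwise utility is routed through the coalition-dependent effect factors $\varepsilon$ and coalitional relevances $\rho$; the proof then characterises the cohesive coalitions as exactly the meta-clique coalitions and reads the clique number off a solution. You instead take $\NR=\varnothing$, which collapses $\nu(r_1,r_2,D)$ to the coalition-independent $\sigma(r_1,r_2)$ and makes $\nu(D)$ a plain sum of edge weights; adjacency is encoded by shared constraints with degree padding, and the link to $\mathsf{DG\_D}$ goes through ``a solution whose top utility reaches $w$ exists iff $\max_D\nu(D)\geq w$'', justified by the construction in Thm.~\ref{thm:existence}. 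Your route buys real simplicity: it needs neither the tradeoff matrix nor any $\rho$-computation, and it thereby sidesteps a soft spot in the paper's own argument --- as written, $\rho(a_{i_s,j},D_U)$ sums $\sigma$ over \emph{all} of $D_U$, so it should also pick up $(\ell-1)n$ terms equal to $\lambda$ from the other meta-nodes, which the paper's evaluation $(n-1)\cdot\tfrac{\gamma}{2(n-1)}$ omits. The price you pay is the dummy/padding bookkeeping you already flagged (note $\C$ is a set of subsets, so distinct private constraints for one vertex really do force fresh dummy requirements), plus the reliance on completing the maximal coalition to a full cohesive, expansion-free decomposition; but the paper's proof needs the same completion step (leftover vertices become singleton cliques), so that obligation is shared rather than new. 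Both constructions are polynomial-size, and your biconditional $\max_D\nu(D)\geq\binom{k}{2}p\iff\omega(G)\geq k$ does hold for $k\geq 3$ once the dummy-pair utility $\gamma/M$ is compared against $w$, as you do.
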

\begin{proof}
The proof is via a reduction from the maximal clique problem, which is a well-known NP-hard problem.
Given an undirected graph $H=(V, E)$, we construct an attribute primitive $\A$ such that any cohesive coalition in $G_\A$ reveals a clique in $H$.  
Suppose $V=\{1,2,\ldots,n\}$. The requirements   of $\A$  consist of $n^2$ scenarios:
$\R=\NR\coloneqq \{a_{i,i'}\mid 1\leq\! i\!\leq\! n, 1\leq\! i'\!\leq\! n\}$. In particular, all requirements are non-functional.
We define an edge relation $E'$ on $\NR$ such that
\begin{enumerate}
\item $(i,j)\in E$ iff $(a_{i,i'},a_{j,j'})\in E'$ for some $1\leq i' \leq n$ and  $1\leq j'\leq n$
\item If $(a_{i,i'},a_{j,j'})\in E'$  then $(a_{i,i''},a_{j,j''})\notin E'$ for any $(i'',j'')\neq (i',j')$.
\item Any $a_{i,i'}$ is attached to at most one edge in $E'$.
\end{enumerate}
Note that such a relation $E'$ exists as any node $i\in V$ is only connected with at most $n-1$ other nodes in $H$.
Intuitively, a set of requirements $A_i=\{a_{i,1},\ldots,a_{i,n}\}$ serves as a ``meta-node'' and corresponds to the node $i$ in $H$.
In constructing $\A$, we may define the general scenarios  in such a way that
\begin{itemize}
\item $T(g(a_{i,j_1}), g(a_{i,j_2}))=0$ for any $1\leq i\leq n$ and $j_1\neq j_2$.
\item $T(g(a_{i_1,j_1}), g(a_{i_2,j_2}))=-1$ for any $(i_1,i_2)\notin E$.
\item $T(g(a_{i_1,j_1}), g(a_{i_2,j_2}))=1$ for any $(a_{i_1,j_1},a_{i_2,j_2})\in E'$
\item $T(g(a_{i_1,j_1}), g(a_{i_2,j_2}))=0$ for any $(i_1,i_2)\in E$ but $(a_{i_1,j_1},a_{i_2,j_2})\notin E'$
\end{itemize}
For every $1\leq i\leq n$ and $1\leq j<j'\leq n$, put in a constraint $c_{i}(j,j')=\{a_{i,j}, a_{i,j'}\}$.   Thus the relevance between $a_{i,j}$ and $a_{i,j'}$ is \vspace*{-0.1cm}
\[
    \sigma(a_{i,j},a_{i,j'})=\frac{|c(a_{i,j})\cap c(a_{i,j'})|}{|c(a_{i,j})\cup c(a_{i,j'})|}=\frac{\gamma}{2(n-1)}
\]
Furthermore if $i\neq i'$, then for any $j,j'$ we set $\sigma(a_{i,j},a_{i,j'})=\lambda$.
Suppose $U=\{i_1,\ldots,i_\ell\}$ induces a complete subgraph of $H$. We define the {\em meta-clique coalition} of $U$ as
\[
D_U=\bigcup_{1\leq j\leq \ell} A_{i_j}
\]
By the above definition, for any $1\!\leq\!s\!<\!t\!\leq\!\ell$, take $j,j'$ such that $(a_{i_s,j},a_{i_t,j'})\in E'$. 
\begin{align*}
w(i_s,i_{t},D_U) & = \varepsilon(a_{i_s,j},D_U) + \varepsilon(a_{i_t,j'},D_U) \\
& = \rho(a_{i_s,j},D_U) + \rho(a_{i_t,j'},D_U)\\
& = (n-1)\times \frac{\gamma}{2(n-1)} + (n-1)\times \frac{\gamma}{2(n-1)}  =\gamma
\end{align*}
Thus $\nu(D_U)=\frac{n(n-1)\gamma}{2}$. Taking out any element from $D_U$ results in a strict decrease in utility, and hence $D_U$ is cohesive.

Now take any coalition $D\subseteq \R$ that contains two requirements $a_{i,i'}$, $a_{j,j'}$ such that $(i,j)\notin E$.
Let $s=|A_i\cap D|$ and $t=|A_j\cap D|$.  Note also that
$\sigma(a_{j,j'},a_{i,i''}) = \lambda$ for any $a_{i,i''}\in A_i\cap D$. Therefore we have\vspace*{-0.1cm}
\[
  \nu(D)- \nu(D\setminus \{a_{j,j'}\})  \leq \gamma+2 w(a_{j,j'},a_{i,i'},D)\times s \leq  \gamma+2\lambda +\gamma = 2(\lambda+\gamma)<0
\]\vspace*{-0.1cm}
The last inequality above is by assumption that $\lambda<-\gamma$. Thus $D$ is not cohesive.

By the above argument, a coalition $D\subseteq \R$ is cohesive in $G_\A$ iff $D$ is the meta-clique coalition $D_U$ for some clique $U$ in $H$. Furthermore, a decomposition $\vec{D}=(D_1,D_2,\ldots,D_k)$ is a solution in $G_\A$ iff $V$ can be partitioned into sets $U_1,\ldots,U_k$ where each $U_i$ is a clique, and $D_i=D_{U_i}$ for all $1\!\leq\!i\!\leq\!k$.
 In particular, $H$ has a clique with  $\ell$ nodes if and only if $G_\A$ has a solution that contains a coalition whose utility reaches  $\frac{\ell(\ell-1)\gamma}{2}$. This finishes the reduction.\qed
\end{proof}
Theorem~\ref{thm:NPComplete} shows that, in a sense, identifying a ``best'' solutions in a DG $G_\A$ is hard. The main difficulty comes from the fact that one would examine all subsets of players to find an optimal cohesive coalition.  This calls for a relaxed notion of a solution that is computationally feasible. To this end we introduce the notion of {\em $k$-cohesive coalitions}. Fix $k\in \N$ and enforce this rule: Binding can only take place on $k$ or less players. That is, a coalition $C$ is {\em $k$-cohesive} whenever $\nu(C)$ is greater than the utility of any subsets with at most $k$ players.\vspace*{-0.1cm}

\begin{definition} Fix $k\in \N$. In a DG $G_\A =  (\FR \cup \NR, \nu)$, we say a coalition $D \subset \FR \cup \NR$ is {\em $k$-cohesive} if $\nu(D') < \nu(D)$ for all $D' \subset D$ with $|D'| \leq k$. An decomposition $\vec{D}$ is {\em $k$-cohesive} if every coalition in $\vec{D}$ is $k$-cohesive; if $\vec{D}$ is also expansion-free, then it is a {\em $k$-cohesive solution} of the game $G_\A$.
\end{definition}

\paragraph*{\textit{Remark.}} In a sense, the value $k$ in the above definition indicates a level of {\em expected cohesion} in the decomposition process. A higher value of $k$ implies less restricted binding within any coalition, which results in higher ``sensitivity'' of design elements to conflicts. In a software tool which performs ADD based on DG, the level $k$ may be used as an additional parameter.

\begin{algorithm}
\caption{\label{proc:kcombine}$\mathsf{DGame}(\A,k)$}
\begin{algorithmic}[1]
   \INPUT Attribute primitive $\A$, $k>0$
   \OUTPUT Attribute Decomposition $\vec{D}$
\State $\vec{D} \leftarrow \mathsf{Cohesives}(\A, k)$
\State $\mathsf{Combine}\leftarrow \true$
\While{$\mathsf{Combine}$}
    \State  $\mathsf{Combine}\leftarrow \false$
    \For{$(D,D') \in \vec{D}^2$, $D\neq D'$}
		 \If{$\nu(D' \cup D) > \nu(D)$ and $\nu(D' \cup D) > \nu(D')$}
			\State $D \leftarrow D' \cup D$ and remove $D'$ from $\vec{D}$
		 	\State $\mathsf{Combine}\leftarrow \true$
		 \EndIf
    \EndFor	
\EndWhile
\State {\bf return} $\vec{D}$
\end{algorithmic}
\end{algorithm}
\vspace*{-1cm}
\begin{algorithm}
\caption{\label{prc:kCohesive}$\mathsf{Cohesive}(\A, k)$}
\begin{algorithmic}[1]
   \INPUT Attribute primitive $\A$, $k>0$
   \OUTPUT Attribute Decomposition $\vec{D}$
\State $\vec{D} \leftarrow [\ ]$, $R \leftarrow \FR \cup \NR$	
\While{$|R| > 0$}
	\State $S \leftarrow  \mathsf{max}(R,k)$ // compute a maximally $k$-cohesive coalition
	\State $R \leftarrow R\setminus S$
	\State $\vec{D} \leftarrow \mathtt{[}\vec{D},S\mathtt{]}$
\EndWhile
\State {\bf return} $\vec{D}$
\end{algorithmic}
\end{algorithm}

Let $R$ be a set of requirements. A coalition $D$ is called {\em maximally $k$-cohesive} in $R$ if $|D|\leq k$, $D$ is $k$-cohesive and $\nu(D)\geq \nu(D')$ for any $D'\subseteq R$. Suppose the operation $\mathsf{max}(R,k)$ computes a maximally $k$-cohesive set in $R$.
The algorithm $\mathsf{DGame}(\A,k)$ (Proc.~\ref{proc:kcombine}), which uses $\mathsf{Cohesive}(\A, k)$ (Proc.~\ref{prc:kCohesive}) as a subroutine, computes a $k$-cohesive solution of $G_\A$.
Note that the $\mathsf{Cohesive}(\A,k)$ operation maintains a list $\vec{D}$, which when returned, denotes a decomposition. Note also that the returned $\vec{D}=(D_1,\ldots,D_m)$ satisfies the following condition: \vspace*{-0.2cm}
\[
\forall 1\leq i\leq m: D_i \text{ is maximally $k$-cohesive in $D_{i}\cup \cdots \cup D_m$}
\]
We call this $\vec{D}$ a {\em maximally $k$-cohesive decomposition}.

\begin{lemma}\label{lem:ksolution}
Suppose $\vec{D}$ is a maximally $k$-cohesive decomposition. Take any $1\!\leq\!i\!<\!j\!\leq\!n$.
If $\nu(D_i\cup D_j)>\max\{\nu(D_i),\nu(D_j)\}$ then $D_i\cup D_j$ is $k$-cohesive.
\end{lemma}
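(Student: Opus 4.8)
The plan is to exploit the defining property of a maximally $k$-cohesive decomposition: each $D_i$ is maximally $k$-cohesive in the tail $R_i := D_i \cup D_{i+1} \cup \cdots \cup D_n$. First I would record the key containment. Since $i < j$, the coalition $D_j$ lies inside $R_i$, and hence $D_i \cup D_j \subseteq R_i$. This nesting is exactly what connects the hypothesis about the merged coalition to the maximality guarantee enjoyed by $D_i$, so it is the structural fact I would isolate at the outset.

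Next I would unpack what ``maximally $k$-cohesive in $R_i$'' gives us. By definition it means $\nu(D_i) \geq \nu(D')$ for every $D' \subseteq R_i$ with $|D'| \leq k$; that is, $D_i$ attains the largest utility among all subsets of $R_i$ of size at most $k$. The entire argument hinges on applying this inequality to the small subsets of $D_i \cup D_j$, which is legitimate precisely because, by the containment above, those subsets also sit inside $R_i$.

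The core is then a short chain. To verify that $D_i \cup D_j$ is $k$-cohesive, I would take an arbitrary $C \subsetneq D_i \cup D_j$ with $|C| \leq k$. Since $C \subseteq R_i$ and $|C| \leq k$, the maximality of $D_i$ yields $\nu(C) \leq \nu(D_i)$. Combining this with the hypothesis $\nu(D_i) \leq \max\{\nu(D_i), \nu(D_j)\} < \nu(D_i \cup D_j)$ gives $\nu(C) < \nu(D_i \cup D_j)$. As $C$ was an arbitrary proper subset of size at most $k$, this is precisely the definition of $D_i \cup D_j$ being $k$-cohesive.

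I expect the only real subtlety, and the reason the conclusion is $k$-cohesion rather than full cohesion, to be the role of the size bound $|C| \leq k$. It is exactly this bound that permits invoking the maximality of $D_i$ among the small subsets of $R_i$; the same reasoning cannot establish ordinary cohesion, because $D_i \cup D_j$ may itself have more than $k$ elements and therefore falls outside the scope of the maximality guarantee. (Indeed, were maximality to range over all subsets of $R_i$, the hypothesis $\nu(D_i \cup D_j) > \nu(D_i)$ could never hold.) Making sure the size restriction is used correctly, and that the containment $D_i \cup D_j \subseteq R_i$ is justified from $i < j$ together with the structure of a maximally $k$-cohesive decomposition, are the two points that need care.
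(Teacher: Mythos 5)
Your proof is correct and follows essentially the same route as the paper: both use the containment $D_i\cup D_j\subseteq S_i$ (your $R_i$) to invoke the maximality of $D_i$ over subsets of size at most $k$, giving $\nu(C)\leq \nu(D_i)<\nu(D_i\cup D_j)$ for every small $C$. Your remark about why the size bound $|C|\leq k$ is essential (and why the maximality cannot range over all subsets) is a correct reading of the paper's intended definition of ``maximally $k$-cohesive.''
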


\begin{proof} Let $S_i=\bigcup_{i\leq j\leq m} D_j$ for any $i=1,\ldots, m$.
Suppose $\nu(D_i\cup D_j)>\max\{\nu(D_i),\nu(D_j)\}$ for $1\leq i<j\leq m$. By assumption $D_i$ is maximally $k$-cohesive in $S_i$. For any finite set $U\subseteq D_i\cup D_j\subseteq S_i$ such that $|U|\leq k$, we have $\nu(U)\leq \nu(D_i) < \nu(D_i\cup D_j)$. Hence $D_i\cup D_j$ is also $k$-cohesive. \qed
\end{proof}

\begin{theorem}\label{thm:algo} Given an attribute primitive $\A$,
the $\mathsf{DGame}(\A,k)$  algorithm  computes a $k$-cohesive solution of the decomposition game $G_\A$ in time $O(n^{k})$, where $n$ is the number of requirements in $\A$.
\end{theorem}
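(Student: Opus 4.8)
\textit{Proof plan.}
The plan is to establish the two assertions of the theorem separately: first that the returned decomposition $\vec{D}$ is a $k$-cohesive solution (that is, both $k$-cohesive and expansion-free), and then that the running time is $O(n^k)$. Throughout I would exploit two facts: that $\mathsf{DGame}(\A,k)$ starts from the output of $\mathsf{Cohesive}(\A,k)$, which is a maximally $k$-cohesive decomposition $(D_1,\ldots,D_m)$ — so each original block $D_i$ is $k$-cohesive and satisfies $\nu(D_i)\geq \nu(U)$ for every $U\subseteq S_i$, where $S_i=D_i\cup\cdots\cup D_m$ — and that the only operation performed afterwards is to replace two current blocks $D,D'$ by their union $D\cup D'$ whenever $\nu(D\cup D')>\max\{\nu(D),\nu(D')\}$.

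The heart of the correctness argument is to show that $k$-cohesion is preserved under \emph{every} merge, not only the first one covered by Lemma~\ref{lem:ksolution}. To this end I would maintain the invariant that at every stage each current coalition $C$ is a union $D_{i_1}\cup\cdots\cup D_{i_p}$ of original blocks with $i_1<\cdots<i_p$, so that $C\subseteq S_{i_1}$, and that $\nu(C)\geq \nu(D_{i_1})$. The inclusion is immediate since $S_{i_1}\supseteq S_{i_\ell}\supseteq D_{i_\ell}$ for all $\ell$. The utility bound I would prove by induction on the number of merges: a fresh block has $\nu(C)=\nu(D_{i_1})$; and when $C=A\cup B$ is created from blocks $A,B$ with, say, $i_1=\min(A)\leq\min(B)$, the merge guard gives $\nu(C)>\nu(A)$ while the inductive hypothesis gives $\nu(A)\geq\nu(D_{\min(A)})=\nu(D_{i_1})$, so $\nu(C)>\nu(D_{i_1})$. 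With the invariant in hand, $k$-cohesion of any current coalition $C$ follows exactly as in Lemma~\ref{lem:ksolution}: for every proper $U\subsetneq C$ with $|U|\leq k$ we have $U\subseteq S_{i_1}$, hence $\nu(U)\leq\nu(D_{i_1})\leq\nu(C)$, and strictness holds either because $\nu(C)>\nu(D_{i_1})$ once $C$ has undergone a merge, or by the $k$-cohesion of the untouched block $D_{i_1}$ when $C=D_{i_1}$.

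For expansion-freedom and termination I would argue as follows. Each successful merge strictly decreases $|\vec{D}|$, which is bounded below by $1$, so at most $m-1\leq n-1$ merges occur and the while loop runs only $O(n)$ times before $\mathsf{Combine}$ stays $\false$. Upon exit no pair $(D,D')$ satisfies the guard, i.e. $\nu(D\cup D')\leq\max\{\nu(D),\nu(D')\}$ for every pair, which is precisely the expansion-freedom condition. The one point needing care is that expansion-freedom is phrased with a strict inequality whereas failure of the guard yields only $\leq$; I would either observe that the boundary case $\nu(D\cup D')=\max\{\nu(D),\nu(D')\}$ gives no merge incentive and so is consistent with the ``no beneficial merge'' reading of the solution concept, or adopt the non-strict convention already used implicitly in the proof of Theorem~\ref{thm:existence}. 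Combining the two parts shows $\vec{D}$ is a $k$-cohesive solution.

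Finally, for the time bound, the dominant cost is the subroutine $\mathsf{max}(R,k)$, which by brute force enumerates every subset of $R$ of size at most $k$; there are $\sum_{j\leq k}\binom{|R|}{j}=O(n^k)$ of these, and for each, checking $k$-cohesion and evaluating $\nu$ costs only a factor depending on $k$ (constant for fixed $k$). The loop in $\mathsf{Cohesive}$ calls $\mathsf{max}$ at most $n$ times, and the merge loop performs $O(n^2)$ guard evaluations over $O(n)$ passes, so the running time is dominated by the $O(n^k)$ subset enumeration, yielding the claimed bound once the polynomial overhead of the loops is absorbed into the leading term. I expect the main obstacle to be the cohesion-preservation step of the second paragraph: showing that the containment/extremality invariant survives an \emph{arbitrary} sequence of merges is exactly what upgrades the single-merge Lemma~\ref{lem:ksolution} to the full correctness claim, whereas the complexity estimate is then routine bookkeeping.
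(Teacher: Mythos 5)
Your proposal follows essentially the same route as the paper: obtain a maximally $k$-cohesive decomposition via $\mathsf{Cohesive}(\A,k)$, argue that the greedy merges preserve $k$-cohesion via Lemma~\ref{lem:ksolution}, observe that termination of the while-loop yields expansion-freedom, and bound the running time by the $O(n^k)$ subsets of size at most $k$. Your treatment is in fact more careful than the paper's at the two delicate points: your induction invariant ($C\subseteq S_{i_1}$ and $\nu(C)\geq\nu(D_{i_1})$ maintained across an \emph{arbitrary} sequence of merges) closes the gap left by the paper's one-line appeal to the lemma, which as stated only covers merging two original blocks, and your remark about the strict-versus-nonstrict inequality at loop exit flags a boundary case in the definition of expansion-freedom that the paper silently passes over.
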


\begin{proof}
The  $\mathsf{DGame}(\A,k)$ algorithm calls $\mathsf{Cohesive}(\A,k)$ to produce a maximally $k$-cohesive decomposition $\vec{D}$, and then performs several iterations to ``combine'' the coalitions in $\vec{D}$. By Lemma~\ref{lem:ksolution}, the decomposition $\vec{D}$ after each iteration is $k$-cohesive. There is a point when for all $D,D'\in \vec{D}$ we have $\nu(D\cup D') \leq \max\{\nu(D),\nu(D')\}$. At this moment,  the {\bf while}-loop will terminate and $\vec{D}$ is expansion-free.
The time complexity is justified as there are $O(n^k)$ subsets of $\FR\cup \NR$ with size $\leq k$. Thus computing a maximally $k$-cohesive decomposition takes time $O(n^{k})$. \qed
\end{proof}

\section{Case Study: Cafeteria Ordering System} \label{sec:case}
\begin{wrapfigure}{r}{5.5cm}\vspace*{-1.1cm}
 \caption{\label{fig:cafe} \small Interactions between requirements in the COS \cite{WKJ2013}. Blue edges indicate positive interactions and red edges indicate negative interactions.}
 \includegraphics[width=6cm]{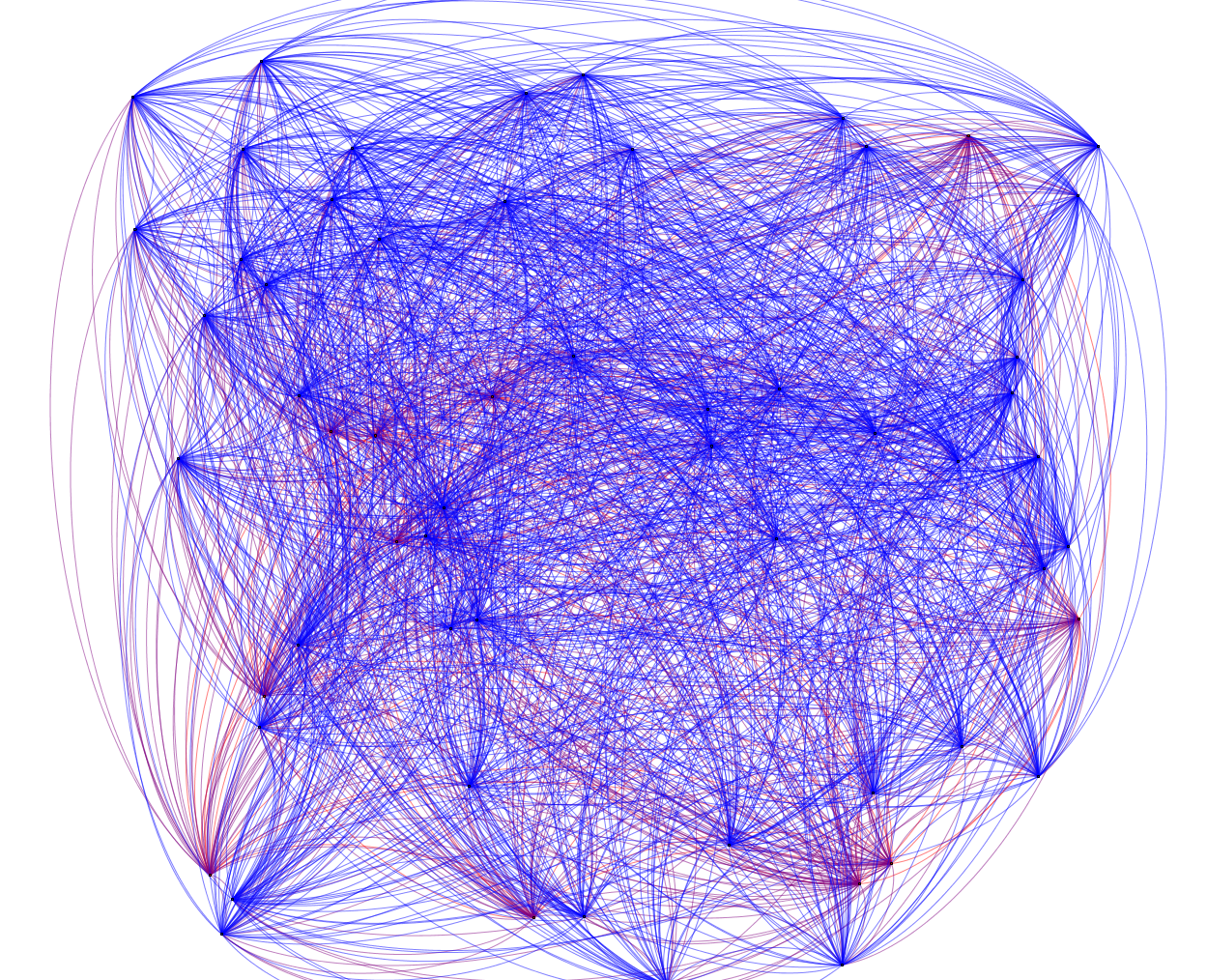} \vspace*{-1cm}
\end{wrapfigure}
To demonstrate applicability of our game model in real-world, we build a DG for a cafeteria ordering system (COS). A COS permits employees of a company to order meals from the company cafeteria online and is a module of a larger cafeteria management system. The requirements of the project have been produced through a systematic requirement engineering process and is well-documented (See full details from \cite[Appendix C]{WKJ2013}).
Since COS is a subsystem within a larger system, the requirements also incorporate interfaces with other subsystems of the overall system.  The initial attribute primitive has 60 requirements with  $|\NR|=11$, $|\FR|=49$ and 7 design constraints.
Non-functional requirements conflict with each other, e.g., the general scenario $\mathsf{USE}$ conflicts with the general scenario $\mathsf{PER}$. Also the requirements exhibit some complex relationships, e.g. $\mathsf{SEC1} \hookrightarrow\mathsf{Order.Pay.Deduct}$.

We demonstrate the complicated interactions among requirements using a complete graph where nodes are all requirements in $\R=\NR\cup \FR$; see  Fig.~\ref{fig:cafe}. The edges are in two colours: $(r_1,r_2)$ gets blue if $w(r_1,r_2, \R)>0$ and red if $w(r_1,r_2,\R)<0$.
{\em (For completeness, we include descriptions of constraints and requirements in the APPENDIX.)}

We run the $\mathsf{DGame}(\A,k)$ algorithm  to identify a $k$-cohesive solution for different levels $k$ of expected cohesion. In order to clearly identify sub-components, we give a higher penalty $\lambda$ between conflicting requirements: $\alpha=0.4$, $\beta=0.3$, $\gamma=0.3$, $\lambda=-1.3$. We choose  $k\in \{1,\ldots,7\}$. As argued above, setting a higher value of $k$ should in principle improve the quality of the output decomposition, although this also means a longer computation time. We implement our algorithm using Java  on a laptop with Intel Core i7-3630QM CPU 2.4GHz 8.0GB RAM. The running time for different values of $k$ is: 503 milliseconds for $k=3$ and approximately $1140$ seconds for $k=6$.

\begin{table}\caption{\footnotesize Resulting $3$- and $6$-cohesive solutions, ordered by payoff values.}
{\scriptsize \centering
\begin{tabular}{|c|c|c|}
\hline           &  $3$-Cohesive Solution & $6$-Cohesive Solution \\ \hline
Coalition 0& \begin{tabular}{@{}c@{}} AVL1 ROB1 SAF1 SEC(1,2,4) USE(1,2)\\ Order.Confirm Order.Menu.Data \\Order.Deliver.(Select,Location)\\  Order.Pay Order.Place Order.Retrieve \\  Order.Units.Multiple UI2 UI3  \end{tabular} &  \begin{tabular}{@{}c@{}} AVL1 ROB1 SAF1 SEC(1,2,4) PER(1,2,3)\\ USE(1,2) Order.Confirm Order.Deliver \\Order.Deliver.(Select,Location) \\  Order.Menu.Date Order.Pay \\ Order.Retrieve  Order.Place Order.Units \\Order.Units.Multiple  UI2 UI3  \end{tabular}  \\\hline
Coalition 1& \begin{tabular}{@{}c@{}}PER(1,2,3) Order.Units.TooMany \\Order.Deliver.(Times,Notimes)\\
Order.Place.(Cutoff,Data,Register,No)\\ Order.Pay.(OK,NG) Order.Done.Failure \\Order.Confirm.(Prompt,Response,More)\end{tabular}& \begin{tabular}{@{}c@{}}Order.pay.(Deliver,Pickup,Deduct)\\Order.Done.Patron SI2.2 SI2.3 \end{tabular} \\\hline
Coalition 2& \begin{tabular}{@{}c@{}}Order.pay.(Deliver,Pickup,Deduct)\\Order.Done.Patron SI2.2 SI2.3 \end{tabular} & \begin{tabular}{@{}c@{}} Order.Units.TooMany \\Order.Deliver.(Times,Notimes)\\
Order.Place.(Cutoff,Data,Register,No)\\ Order.Pay.(OK,NG) Order.Done.Failure \\Order.Confirm.(Prompt,Response,More)\end{tabular} \\ \hline
Coalition 3& \begin{tabular}{@{}c@{}} Order.Menu Order.Unit Order.Done\\Order.Done.(Menu,Times,Cafeteria) \\Order.Done.(Store,Inventory)\\Order.Deliver Order.Menu.Available\\Order.Confirm.Display\\Order.Pay.Method  SI1.3 SI2.5 CI2\end{tabular} & \begin{tabular}{@{}c@{}}  Order.Done\\Order.Done.(Menu,Times,Cafeteria) \\Order.Done.(Store,Inventory)\\  SI1.3 SI2.1 SI2.4 SI2.5 CI1 CI2\end{tabular} \\ \hline
Coalition 4 & SI1.1 SI1.2 & Order.Menu.Available SI1.1 SI1.2 \\\hline
\end{tabular}
}
\end{table}

\paragraph*{\bf Cohesion level $k= 3$.} The $3$-cohesive solution consists of 5 coalitions. An examination at the requirements in each coalition reveals: {\em Coalition 0} relates to usability and ensures availability of user interactions; it apparently corresponds to a user interface module. {\em Coalition 1} is performance-oriented and is separated from the usability requirements; it thus corresponds to a back-end module that handles all the internal operations. {\em Coalition 2} deals with the payroll system outside COS and defines a controlling interface from COS to payroll. {\em Coalition 3} consists of several functional requirements that control life cycle of the COS. {\em Coalition 4} is an interface to access the inventory system outside COS.

 It is clear that this solution separates the control, user inputs and computation modules, and fits the MVC (Model-View-Controller) architectural pattern. In addition, there is a design constraint that requires  the use of Java and Oracle database engine. So, we instantiate the design elements as in Fig.~\ref{fig:difficulty3}.

\begin{figure}
\begin{center}
\caption{\label{fig:difficulty3} {\footnotesize The $3$-cohesive solution. \emph{Coalition 0}: Java Spring framework uses server page as user interface and provides a powerful encryption infrastructure (Spring Crypto Module). Server page is suitable for implementing interactive user interface. {\em Coalition 1}: Enterprise Java Bean (EJB) is a middleware (residing in the application server) used to communicate between different components. It provides rich features for processing HTTP requests. {\em Coalition 2}: The COS uses a package solution from corresponding payroll system. {\em Coalition 3}: A servlet is a controller in Java application server which separates business logic from control.  {\em Coalition 4}:   A web service interface  outside COS.}}
\includegraphics[width=11cm]{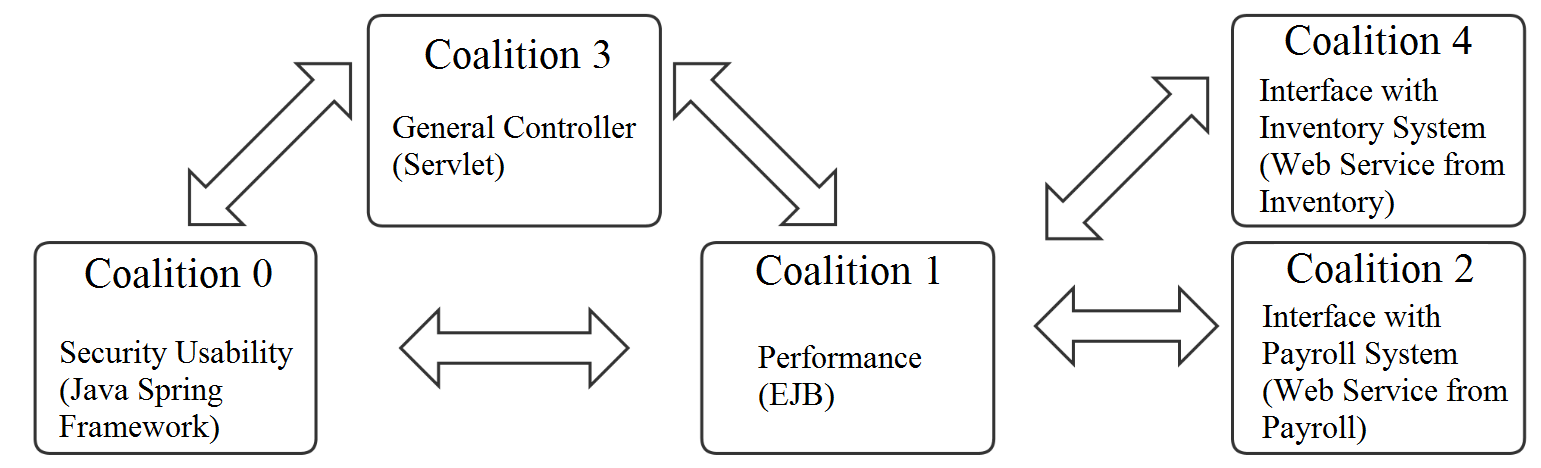}
\end{center}
\end{figure}
\paragraph*{\bf Cohesion level $k=6$.} The $6$-cohesive solution also contains five coalitions, with a similar structure as the $3$-cohesive counterpart. There are, nevertheless, several important differences:
Firstly, the performance ($\mathsf{PER}$) scenarios now belong to coalitions 0. This means that some performance-related computation is moved to the front-end. This is reasonable as this lightens the computation load of the back-end and thus improving performance and availability.
Secondly, the functional requirement $\mathsf{Order.Menu.Available}$ is moved to coalition 4, which is the interface between COS and the inventory system. This requirement specifies that the menu should only display those food items that are available in inventory. 

Instead of server page, we use scripting to reduce the server's computation load. This can be achieved by changing the front-end to a JavaScript oriented designs. The main difficulty lies in that we need to put extra effort when using JavaScript to communicate with web server (such as AJAX) in order to ensure usability, performance and security. We instantiate design elements as in Fig.~\ref{fig:difficulty6}.

\begin{figure}
\begin{center}\caption{\label{fig:difficulty6} \footnotesize The $6$-cohesive solution. {\em Coalition 0} uses JavasScript as a front end for user interface. It also takes some computation for sever in order to achieve better performance. {\em Coalition 1} is an interface for accessing the payroll system. {\em Coalition 2} ensures the business logic in COS. {\em Coalition 3} coordinates input from front end (coalition 0) to back end (coalition 1). {\em Coalition 4} is an interface for accessing the inventory system.}
\includegraphics[width=11cm]{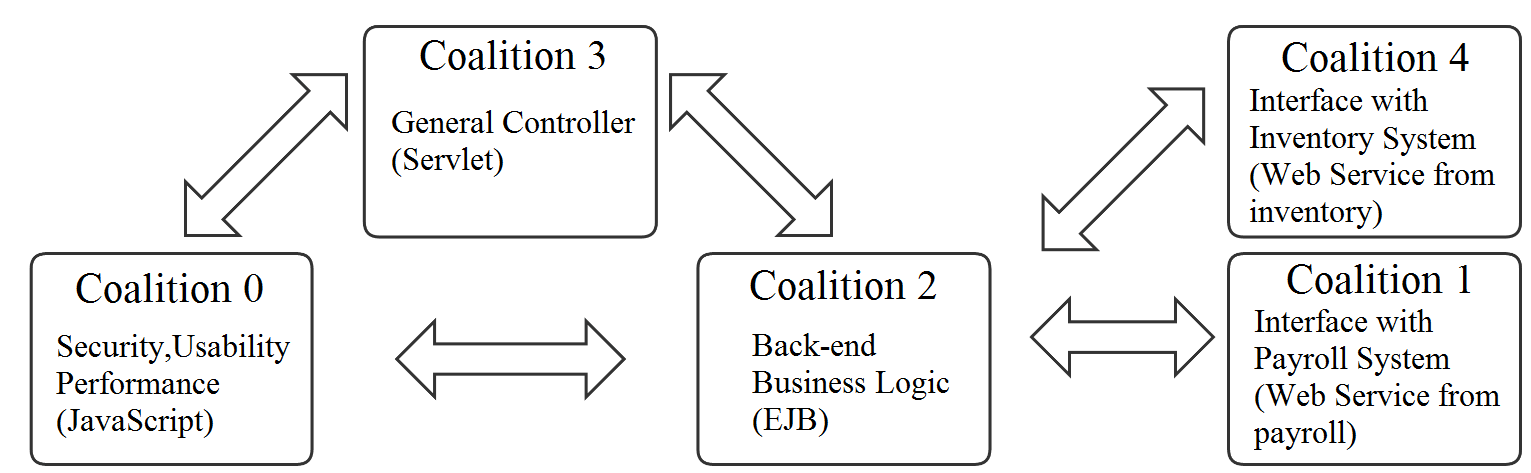}
\end{center}
\end{figure}

\section{Related Work}\label{sec:related}
Bass, Klein and Bachman introduce ADD as a general framework for developing conceptual architecture in \cite{BLMF2002}. They argue that non-functional requirements should drive decision making throughout the entire design process. Furthermore, as non-functional requirements often provide a high-level view of a software system, ADD should follow an iterative decomposition process. They further improve their method in \cite{WR2006} by clarifying how ADD is carried out in a real life project. The technique of evaluating tradeoff between quality attributes in these mentioned works is largely empirical-based.
Kazman et. al. first investigate tradeoff between quality attributes in \cite{KR1998}. They collect and analyse design elements that affect  multiple quality attributes. This approach aims to mitigate risks residing in a software architecture and refine design through this process. The study is further investigated in \cite{ZL2005} which gives a quantitative tradeoff analysis for non-functional requirements, and prioritises non-functional requirements during the ADD process. In \cite{AADHMH2011}, a different approach is provided to elicit non-functional requirements. The work follows the paradigm in \cite{WR2006} but generates more detailed and concrete designs; it computes tradeoff between non-functional requirements based on relationships between non-functional and functional requirements. There are also other algorithmic methods for software architecture design from the perspective of component decomposition. For example, in \cite{LXZ2007}, the authors propose a hierarchical clustering algorithm to decompose functional requirements and non-functional requirements. They label each component with a set of attributes and identify similarities between components based on their common attributes. Hence their approach does not put emphasis on the enhancement and conflict between attributes.

\section{Conclusion and Future work} \label{sec:conclusion}
The use of computational games in software architecture design is a novel technique aimed to contribute to this line of direction. We proposed a game-based approach that, not only builds on established software architecture research (ADD), but is also shown --- through a case study --- to provide reasonable design guidelines to a real world application. We suggest that this framework would be useful in the following:
\begin{itemize}
\item Designing a software system that involves a large number of functionalities and quality attributes, which will result in a complicated architecture design
\item Designing a software system that hinges on the satisfaction of certain core quality attributes
\item Evaluating and analysing the rationale of an architecture design in a formal way; identifying potential risks with a design.
\end{itemize}
It is noted that the framework described here assumes the completion of  requirement analysis. In real life requirements are usually identified as the software is implemented (e.g. the agile software development methodology).
It would thus be interesting to develop a dynamic version of the game model, which supports architectural design using incremental refinements. Another future work is to develop a mechanism which maps coalitions generated by the algorithm to appropriate attribute primitives. This would then lead to a full automation of the ADD process linking requirements to conceptual architecture designs. 





\bigskip

\bigskip

\bigskip


\bigskip

\bigskip

\bigskip

\newpage

\noindent {\Large {\bf APPENDIX:} Constraints and Requirements for COS Case Study}

\bigskip

{\noindent
\noindent {\bf Design Constraints}
\begin{itemize}
	\item CO-2: The system shall use the current corporate standard Oracle database engine.
    \item CO-3: All HTML code shall conform to the HTML 5.0 standard.
    \item BR-2: Deliveries must be completed between 11:00am and 2:00pm local time.
    \item BR-3: All meals in a single order must be delivered to the same location.
	\item BR-8: Meals must be ordered within 14 calendar days of the meal date.
	\item BR-11: If an order is to be delivered, the patron must pay by payroll deduction.
    \item BR-33: 256-bit encryption or network transmissions that involve financial information
\end{itemize}

\noindent {\bf Functional Requirements}
\begin{itemize}
\item {\em Order.Place}: Placing a meal order
\begin{itemize}
\item[] {\em .Register}: Confirm that the Patron is registered for payroll reduction.
\begin{itemize}
\item[] {\em .No}: If the patron is not registered for payroll deduction, the COS shall give the Patron options to register now and continue placing an order
\end{itemize}
\item[] {\em .Date}: The COS shall prompt the Patron for the meal date (See BR-8)
\begin{itemize}
    \item[] {\em .Cutoff}: If the meal date is today and is after the cutoff time, inform the Patron that it's too late 
        The Patron can either change the meal date or cancel the order.
\end{itemize}
\end{itemize}
\item  {\em Order.Deliver}: Delivery or pickup
\begin{itemize}
    \item[] {\em .Select}: The Patron specifies whether the order is to be picked up or delivered
    \item[] {\em .Location}: If the order is to be delivered and there are still available delivery times for the meal date, the Patron shall provide a valid delivery location.
    \item[] {\em .Notimes}: Notify the Patron if there are no available delivery times. The Patron shall either cancel or pick up the order in the cafeteria.
    \item[] {\em .Times}: Display the remaining available delivery times for the meal date, allowing the Patron to request one of the times shown
\end{itemize}
\item  {\em Order.Menu}: Viewing a menu
\begin{itemize}
\item[] {\em .Date}: Display a menu for the date that the Patron specified.
\item[] {\em .Available}: The menu for the specified date shall display only those food items for which at least one unit is available in the cafeteria's inventory and which can be delivered
\end{itemize}
\item  {\em Order.Units}: Ordering multiple meals and multiple food items
\begin{itemize}
    \item[] {\em .Multiple}: Permit the user to order multiple identical meals 
    \item[] {\em .TooMany}: If the Patron orders more units of a menu item than are presently in the cafeteria's inventory, inform  the maximum number of units that can order.
\end{itemize}
\item  {\em Order.Confirm}: Confirming an order
\begin{itemize}
    \item[] {\em .Display}: When the Patron indicates  no wish to order any more food items, display the ordered itmes,  prices, and the payment amount
    \item[] {\em .Prompt}: Prompt the Patron to confirm the meal order.
    \item[] {\em .Response}: The Patron can confirm, edit, or cancel the order.
    \item[] {\em .More}: Let the Patron order additional meals for the same or for a different date. 
\end{itemize}
\item  {\em Order.Pay}: Meal order payment
\begin{itemize}
\item[] {\em .Method}: When the Patron indicates that he is done placing orders, the COS shall ask the user to select a payment method
\item[] {\em .Deliver}: Se BR-11
\item[] {\em .Pickup}: If the meal is to be picked up in the cafeteria, the Patron shall choose to pay by payroll deduction or by cash at the time of pickup
\item[] {\em .Deduct}: If Patron selects payroll deduction, issue a payment request to Payroll
\item[] {\em .OK}: If the payment request is accepted,  display confirmation a message.
\item[] {\em .NG}: If the payment request is rejected, display the reason for the rejections. 
\end{itemize}
\item {\em Order.Done}: Finishing the process after the Patron confirms the order
\begin{itemize}
\item[] {\em .Store}: Assign the next available meal order number to the meal and store the meal order
\item[] {\em .Inventory}: Send a message to the inventory system with the number of  units
\item[] {\em .Menu}: Update the menu for the current order's order date to reflect any items that are now out of stock in the cafeteria inventory
\item[] {\em .Times}: Updates the remaining available delivery times for the date of this order
\item[] {\em .Patron}: Send email message to the Patron with the meal order and payment info
\item[] {\em .Cafeteria}: Send an email message to the Cafeteria Staff with the meal order information
\item[] {\em .Failure}: If any step of Order.Done fails, roll back the transaction and notify the user
\end{itemize}

\item {\bf User  Interfaces}
\begin{itemize}
	\item {\em UI2}: Provide a help link from each displayed webpage to explain how to use that page.
    \item {\em UI3}: The webpages shall permit complete navigation and food item selection
\end{itemize}
\item {\bf Software Interfaces}
\begin{itemize}
	\item {\em SI1.1}: Transmit the quantities of food items ordered to the Cafeteria Inventory
System through a programmatic interface.
	\item {\em SI1.2}: Poll the Inventory System to determine whether a requested item is available.
    \item {\em SI1.3}: When the Cafeteria Inventory System notifies the COS that a specific food item is not available, the COS shall remove that food item from the menu for the current date.
    \item {\em CI1}: Send an email or text message to the Patron to
confirm order acceptance
    \item {\em CI2}: Send an email or text message to the Patron to
report any problems 
\end{itemize}
\end{itemize}

\noindent {\bf Non-functional Requirement} We have 6 general scenarios: \textit{USE, PER, SEC, SAF, AVL, ROB}. Each of them associates multiple scenarios. 

\begin{itemize}
	\item {\em USE1}: Allow a Patron to retrieve the previous meal ordered with a single interaction.
    \item {\em USE2}: 95\% of new users shall be able to order a meal without errors on their first try.
    \item {\em PER1}: Accommodate 400 users and up to 100 concurrent users during the peak usage time, with an estimated average session duration of 8 minutes.
	\item {\em PER2}: 95\% of webpages generated shall download completely within 4 seconds from the time the user requests the page over a 20 Mbps or faster Internet connection.
	\item {\em PER3}: Display confirmation messages to users within an average of 3 seconds and a maximum of 6 seconds after the user submits information to the system.
	\item {\em SEC1}: All network transactions that involve financial information or personally identifiable info shall be encrypted per BR-33.
    \item {\em SEC2}: Users shall be required to log on for all operations except viewing a menu.
    \item {\em SEC4}: The system shall permit Patrons to view only orders that they placed.
    \item {\em SAF1}: The user shall be able to see all ingredients in any items, with allergic reactions.
    \item {\em AVL1}: The COS shall be available at least $98\%$ of the time between 5am and midnight and at least $90\%$ of the time, excluding scheduled maintenance windows.
    \item {\em ROB1}: If the connection between the user and the COS is broken prior to a new order being either confirmed or terminated, the COS shall enable the user to recover an incomplete order and continue working on it.
\end{itemize}
}

\end{document}